\newcommand{\e}{\varepsilon}
\renewcommand{\phi}{\varphi}
\newcommand{\iy}{\infty}
\DeclareMathOperator{\tr}{Tr}
\DeclareMathOperator{\osc}{osc}
\renewcommand{\leq}{\leqslant}
\renewcommand{\geq}{\geqslant}
\newcommand{\cB}{\mathcal{B}}
\newcommand{\cD}{\mathcal{D}}
\newcommand{\cH}{\mathcal{H}}
\newcommand{\cW}{\mathcal{W}}
\newcommand{\st}{\textnormal{ s.t }}
\newcommand{\N}{\mathbf{N}}
\newcommand{\Z}{\mathbf{Z}}
\newcommand{\R}{\mathbf{R}}
\newcommand{\C}{\mathbf{C}}
\DeclareMathOperator{\E}{\mathbf{E}}
\DeclareMathOperator{\card}{card}
\DeclareMathOperator{\spec}{spec}
\renewcommand{\P}{\mathbf{P}}
\newcommand{\M}{\mathcal{M}}
\newcommand{\mN}{\mathscr{N}}
\newcommand{\mU}{\mathcal{U}}
\newcommand{\Id}{\mathrm{Id}}
\newcommand{\ketbra}[2]{| #1 \rangle \langle #2 |}
\newcommand{\ket}[1]{| #1 \rangle}
\newcommand{\bra}[1]{\langle #1 |}
\theoremstyle{plain}
\newtheorem{thm}{Theorem}[] 
\newtheorem{cor}[thm]{Corollary}
\newtheorem{prop}[thm]{Proposition}
\newtheorem{lem}[thm]{Lemma}
\newtheorem{fact}[thm]{Fact}
\theoremstyle{definition}
\theoremstyle{remark}
\begin{document}

\title{Hastings's additivity counterexample via Dvoretzky's theorem}
\author{Guillaume Aubrun} 
\address{Institut Camille Jordan, Universit\'e Claude Bernard Lyon 1, 43 boulevard du 11 novembre 1918, 69622 Villeurbanne CEDEX, France}
\email{aubrun@math.univ-lyon1.fr}

\author{Stanis\l aw Szarek}
\address{Equipe d'Analyse Fonctionnelle, 
Institut de Math\'ematiques de Jussieu,              
Universit\'e Pierre et Marie Curie-Paris 6,          
4 place Jussieu
75252 Paris, France  {\sl and} 
Department of Mathematics, Case Western Reserve University, Cleveland, Ohio 44106, USA 
}
\email{szarek@math.jussieu.fr}

\author{Elisabeth Werner}
\address{Department of Mathematics, Case Western Reserve University, Cleveland, Ohio 44106, USA \ {\sl and} Universit\'{e} de Lille 1, UFR de Math\'{e}matique, 59655 Villeneuve d'Ascq, France
}
\email{elisabeth.werner@case.edu}
\thanks{The research of the first named author was partially supported by  the {\itshape Agence Nationale de la Recherche} grant ANR-08-BLAN-0311-03. The research of the second and third named authors was partially supported by their respective grants from the {\itshape National Science Foundation} (U.S.A.) and from the 
{\itshape U.S.-Israel Binational Science Foundation}. 
The authors would like to thank M. B. Hastings and M. Horodecki for valuable comments, 
and {\itshape MF Oberwolfach} -- where insights crucial to this project were crystallized 
-- for their hospitality.
}

\begin{abstract}
The goal of this note is to show that Hastings's counterexample to the additivity of minimal
output von Neumann entropy can be 
readily deduced from a sharp version of Dvoretzky's theorem.
\end{abstract}

\maketitle

\section*{Introduction}
A fundamental problem in Quantum Information Theory is to determine the 
capacity of a quantum channel to transmit classical information. 
The seminal Holevo--Schumacher--Westmoreland theorem 
expresses this capacity as a regularization of the so-called Holevo $\chi$-quantity 
(which gives the one-shot 
capacity) over multiple uses of the channel; see, e.g., \cite{nc}. 
This extra step 
could have been skipped 
 if the $\chi$-quantity had been additive, i.e., if
\begin{equation}
 \label{additivity-chi}
\chi(\Phi \otimes \Psi) = \chi(\Phi) + \chi(\Psi) 
\end{equation}
for every pair $(\Phi,\Psi)$ of quantum channels. 
It would have then followed that the $\chi$-quantity  and the capacity coincide, yielding a 
single-letter formula for the latter. 
Determining the veracity of \eqref{additivity-chi} had been a major open problem
for at least a decade (we refer, e.g., to the survey \cite{holevo-icm}). 
A substantial progress was made by Shor \cite{shor} who showed that 
\eqref{additivity-chi} was formally equivalent to the additivity
of the minimal output von Neumann entropy of quantum channels --- 
a much more tractable quantity. Using this equivalence, the equality  
\eqref{additivity-chi} was eventually shown to be false by Hastings 
\cite{hastings}, with appropriate randomly constructed channels as 
a counterexample. 

In this note, we 
revisit Hastings's counterexample from the 
viewpoint of Asymptotic Geometric Analysis (AGA). This field --- 
originally an offspring of Functional Analysis --- aims at studying 
geometric properties of convex bodies 
(or equivalently, norms) in spaces of high (but finite) dimension. 
More specifically, our goal is to show that (a variant of) Hastings's 
analysis can be rephrased in the language of AGA, and his result 
 deduced with only minor effort from a sharp version
of Dvoretzky's theorem \cite{dvoretzky} on almost spherical sections of convex bodies 
--- a fundamental result of AGA. 
This makes the argument much more transparent and will hopefully lead 
to a better understanding of the problem of capacity.
Our approach is largely inspired by Brandao--Horodecki \cite{bh}, who were able 
to reformulate Hastings's analysis in the framework of concentration of measure.

\section*{Notation}

Throughout the paper, the letters $C$, $c$, $C'$, ... denote absolute 
positive constants, independent of the instance of the problem 
(most notably of the dimensions involved),  whose values may 
change from occurrence to occurrence. 
The values of these constants can be computed by reverse-engineering 
the argument, but we will not pursue this task. 
We also use the following convention: whenever a formula is given 
for the dimension of a (sub)space, it is  tacitly understood that one 
should take the integer part.

Let  $\M_{k,d}$ be the space of $k\times d$ matrices (with complex entries), and $\M_d=\M_{d,d}$. 
More generally, $\mathcal{B}(\cH)$ will stand for the space of (bounded) linear operators on the Hilbert space $\cH$.  
We will write $\|\cdot\|_{p}$ for the Schatten $p$-norm $\| A \|_p = \big( \tr (A^\dagger A)^{p/2} \big)^{1/p}$. The limit case $\|\cdot \|_\iy$ is the operator (or ``spectral'') norm, while 
 $\|\cdot\|_{HS}=\|\cdot\|_{2}$ is the Hilbert--Schmidt (or Frobenius) norm.
Let $\cD(\C^d)$ be the set of {\em density matrices} on $\C^d$, i.e., positive semi-definite trace one operators on $\C^d$ (or {\em states} on $\C^d$).
If $\rho$ is a state on $\C^d$, its {\em von Neumann entropy} $S(\rho)$ is defined as $S(\rho)=-\tr
\rho \log \rho$. 
If $\Phi : \M_m \to \M_{k}$ is a {\em quantum channel} (completely
positive trace preserving map), its {\em minimal output entropy} is
\[ S_{\min}(\Phi) = \min_{\rho \in \cD(\C^m)} S(\Phi(\rho)). \]
Concavity of $S$ implies that the minimum is achieved on a pure state.

\section*{Channels as subspaces}

The crucial insight allowing to relate analysis of quantum channels to
high-dimensional convex geometry is the observation that there is an
essentially one-to-one correspondence between channels and linear subspaces
of composite Hilbert spaces. Specifically, let  $\cW$ be a subspace of $\C^k \otimes \C^d$ of dimension $m$. 
Then $\Phi : \cB(\cW) \to \M_k$ defined by $\Phi(\rho) = \tr_{\C^d}(\rho)$  is 
a  quantum channel; here $\tr_{\C^d}$ is the {\em partial trace} with respect to the second factor in $\C^k \otimes \C^d$.  Alternatively, and perhaps more properly, we could identify 
$\cW$ with $\C^m$ via an isometry $V: \C^m \to \C^k \otimes \C^d$ whose range is $\cW$ 
and define, 
for $\rho \in \M_m$,  the corresponding channel $\Phi: \M_m \rightarrow  \M_k$ by 
\begin{equation}\label{Phi}
\Phi(\rho) = \tr_{\C^d}(V\rho V^\dagger).
\end{equation}

It is now easy to  define a natural family of  random quantum channels. 
They will be associated, via the above scheme, to random $m$-dimensional 
subspaces  $\cW$ of $\C^k \otimes \C^d$, 
distributed according to the Haar measure on the corresponding
Grassmann manifold 
(for some fixed positive integers $m,d,k$ that will be specified later).  
Note that all reasonable parameters of a channel defined by (\ref{Phi}) such as $ S_{\min}(\Phi) $ 
depend only on the subspace $\cW = V(\C^m)$ and not on a particular choice 
of the isometry $V$ (this will be also obvious from what follows). In particular, 
the language of ``random $m$-dimensional subspaces  of $\C^k \otimes \C^d$\,''
is equivalent to that of ``random isometries from $\C^m$ to $\C^k \otimes \C^d$.''

\section*{The additivity conjectures and the main theorem}

The following question has attracted 
considerable attention in the last few years: 
if $\Phi$ and $\Psi$
are two quantum channels, is it true that 
\begin{equation} \label{additivity-smin}
S_{\min}(\Phi \otimes \Psi) = S_{\min}(\Phi)
+ S_{\min}(\Psi) \ ?
\end{equation}
Shor \cite{shor} showed it to be formally equivalent to a number of central 
questions in quantum information theory, including
the additivity of the $\chi$-quantity mentioned in the introduction.

Note that the inequality ``$\leq$'' always holds (consider product input states). 
However, as was first proved by 
Hastings using random constructions \cite{hastings}, 
the reverse inequality is false
 in general. 
The exegesis of Hastings's argument has subsequently been carried out in \cite{bh} and \cite{FKM}. 
We will show here that the analysis of (a variant of) Hastings's example essentially amounts to
applying the right version of Dvoretzky's theorem 
and leads to the conclusion that 
high-dimensional random channels typically violate \eqref{additivity-smin}.

\begin{thm} \label{maintheorem}
Let $k \in \N$, $m=ck^2$ and $d=Ck^2$ ($c$ and $C$ being appropriate absolute constants).
 Let $V : \C^m \to \C^k \otimes \C^d$ be a random isometry and $\Phi: \M_m \to \M_k$ be the corresponding random channel given by {\rm (\ref{Phi})}.
Then for $k$ large enough, with large probability,
\[ S_{\min} ( \Phi \otimes \bar{\Phi} )  < S_{\min} (\Phi) + S_{\min} (\bar{\Phi}).
\]
\end{thm}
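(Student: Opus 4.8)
The plan is to exploit the conjugate structure $\Phi \otimes \bar\Phi$ and compare two quantities: the minimal output entropy of the product channel, which we bound from above by evaluating it on a cleverly chosen input state, and the sum $S_{\min}(\Phi) + S_{\min}(\bar\Phi) = 2 S_{\min}(\Phi)$, which we bound from below using Dvoretzky's theorem applied to the random subspace $\cW = V(\C^m)$. First I would recall the standard observation (going back to Hastings, Winter, and made explicit in \cite{bh, FKM}) that if $V: \C^m \to \C^k \otimes \C^d$ is an isometry, then the maximally entangled state $\ket{\psi} = \frac{1}{\sqrt m}\sum_i \ket{e_i}\ket{\bar e_i} \in \C^m \otimes \C^m$ pushed through $V \otimes \bar V$ yields, after the partial trace over the two $\C^d$-factors, an output state on $\C^k \otimes \C^k$ that has a large eigenvalue --- of order $d/(kd)\cdot(\text{something})$, more precisely of order $\log k / k$ times a constant larger than what a generic state would give --- supported on the corresponding maximally entangled vector of $\C^k \otimes \C^k$. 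This forces $S_{\min}(\Phi \otimes \bar\Phi) \le 2\log k - \Omega(1)$: the output entropy drops below the trivial bound $2\log k$ by a universal constant.

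Next I would produce the matching lower bound for $S_{\min}(\Phi)$. The output state $\Phi(\ketbra{x}{x})$ for a unit vector $x \in \C^m$ is $\tr_{\C^d} (Vx)(Vx)^\dagger$, whose largest eigenvalue equals $\| (\Id_{\C^k} \otimes \bra{\xi}) Vx \|^2$ maximized over unit $\xi \in \C^d$, i.e. the squared $\ell^2_d$-valued operator norm of the vector $Vx \in \C^k \otimes \C^d$ viewed as a $k \times d$ matrix. Equivalently $\|\Phi(\ketbra{x}{x})\|_\iy = \max\{ \|M\|_\iy^2 : M \in \cW, \|M\|_{HS} = 1\}$ is controlled by how close the Hilbert--Schmidt unit sphere of $\cW$ comes to the operator-norm unit ball. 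Here is where Dvoretzky enters: for a random $m$-dimensional subspace $\cW$ of $(\M_{k,d}, \|\cdot\|_{HS})$ with $m \le c\, k^2 \sim c (\text{Dvoretzky dimension of the operator norm ball})$, the sharp (Milman-type, with the right dependence of the distortion) version of Dvoretzky's theorem guarantees that with large probability every $M \in \cW$ satisfies $\|M\|_\iy \le (1+\e) \, \mu$, where $\mu = \E \|G\|_\iy / \sqrt{kd}$ is the mean operator norm of a normalized Gaussian matrix; crucially $\mu \asymp 1/\sqrt{k}$ (for $d \gg k$, say $d = Ck^2$, the Marchenko--Pastur/Bai--Yin asymptotics give $\E\|G\|_\iy \asymp \sqrt d$, hence $\mu \asymp \sqrt d/\sqrt{kd} = 1/\sqrt k$), and the point is that $\e$ can be taken to be a small absolute constant once $C/c$ is large enough. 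Consequently every output state $\Phi(\ketbra{x}{x})$ has all eigenvalues at most $(1+\e)^2/k$, and a short computation with the entropy functional (the entropy of a state on $\C^k$ with $\|\cdot\|_\iy \le (1+\e)^2/k$ is at least $\log k - c'\e$ for small $\e$, by concavity/Pinsker-type estimates around the maximally mixed state) yields $S_{\min}(\Phi) \ge \log k - c'\e$.

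Combining the two bounds, $S_{\min}(\Phi\otimes\bar\Phi) \le 2\log k - c''$ while $S_{\min}(\Phi) + S_{\min}(\bar\Phi) \ge 2\log k - 2c'\e$, and choosing $\e$ small enough (i.e. $C/c$ large enough) that $2c'\e < c''$ gives the strict inequality for all large $k$, with probability tending to $1$. The main obstacle --- and the real content --- is the second step: one needs not merely Dvoretzky's theorem but its sharp form, where for $m$ proportional to the Dvoretzky dimension $k^2$ the distortion $1+\e$ has $\e$ going to $0$ as the proportionality constant shrinks, rather than the cruder statement that gives only $\e$ bounded. This requires either Milman's precise concentration estimate for the operator norm on the Grassmannian (Lipschitz constant $\asymp 1/\sqrt{kd}$ of $M \mapsto \|M\|_\iy$ on the HS-sphere, combined with a net argument on $\cW$), together with the fact that the operator-norm ball in $\M_{k,d}$ is far from Euclidean only at scale $\sqrt k$ --- so that the ``radius'' $\mu \asymp 1/\sqrt k$ is genuinely smaller than the ``in-radius-type'' quantity $1/\sqrt{\max(k,d)}$-free estimate one might naively fear. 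One must also verify the minor point that replacing $\Phi$ by $\bar\Phi$ costs nothing (complex conjugation is an isometry of all the relevant structures, so $S_{\min}(\bar\Phi) = S_{\min}(\Phi)$ and $\bar\Phi$ corresponds to the conjugate random subspace, which has the same distribution), and track the probabilities so that the intersection of the two favorable events still has large probability.
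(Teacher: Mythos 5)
Your overall architecture (upper-bound $S_{\min}(\Phi\otimes\bar\Phi)$ via the maximally entangled input, lower-bound $S_{\min}(\Phi)$ via Dvoretzky applied to the random subspace) matches the paper, but there is a quantitative error in your upper bound that invalidates the comparison, and this in turn reveals that your lower-bound route is not precise enough.

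\textbf{The upper bound is not $2\log k - \Omega(1)$.} The maximally entangled input produces an output on $\C^k\otimes\C^k$ with one eigenvalue $\geq m/(kd) = c/(Ck) = \Theta(1/k)$. A state of dimension $k^2$ with one eigenvalue of size $\alpha/k$ has entropy at most
\[
h\!\left(\frac{\alpha}{k}\right) + \left(1-\frac{\alpha}{k}\right)\log(k^2-1) = 2\log k - \frac{\alpha\log k}{k} + O\!\left(\frac 1k\right),
\]
so the deficit below the trivial bound $2\log k$ is only $\Theta(\log k/k)$, which tends to $0$, not a universal constant. (This is exactly why the whole business is delicate.) Your claim that $S_{\min}(\Phi\otimes\bar\Phi)\leq 2\log k - c''$ with $c''$ an absolute constant is false.

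\textbf{Consequently your lower bound is not sharp enough.} Since the upper-bound gap is $\Theta(\log k / k)$, the lower bound on $S_{\min}(\Phi)$ must have error $o(\log k/k)$ --- in fact the paper gets $O(1/k)$. Your route gives $S_{\min}(\Phi)\geq \log k - c'\e$ where $\e$ is the (constant) Dvoretzky distortion for the operator norm, and this $O(\e)$ error dwarfs the available $\Theta(\log k/k)$ gap. Moreover, the operator-norm route cannot be pushed to the required precision: the Dvoretzky dimension constraint is roughly $m\lesssim d\,\e^2$, so forcing $\e = o(\log k/k)$ with $d=Ck^2$ would force $m = o(\log^2 k)$, incompatible with $m\sim k^2$. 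This is precisely why the paper does \emph{not} work with $\|M\|_\infty$ directly: it instead works with the Hilbert--Schmidt deviation $\tilde g(M)=\|MM^\dagger - \Id/k\|_{HS}$, feeds it into the quadratic entropy estimate $S(\sigma)\geq \log k - k\|\sigma - \Id/k\|_{HS}^2$ (so an $O(1/k)$ bound on $\tilde g$ yields an $O(1/k)$ entropy defect), and crucially uses a two-stage (``bootstrapping'') application of Dvoretzky: first to confine the section to the set $\Omega$ where $\|M\|_\infty\leq 3/\sqrt k$, on which $\tilde g$ has Lipschitz constant $O(1/\sqrt k)$ rather than $O(1)$, and then to $\tilde g$ itself with this improved Lipschitz constant, which is what produces the crucial $\e=1/k$. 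Without this reduction of the Lipschitz constant, the numbers simply do not close.

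The observations you make about $S_{\min}(\bar\Phi)=S_{\min}(\Phi)$ and the union of favorable events are fine, and the maximally entangled input trick is the right idea; the gap is entirely in the quantitative bookkeeping and in the choice of functional to which Dvoretzky is applied.
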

The expression ``with large probability'' in Theorem \ref{maintheorem} and in what follows 
may be understood as ``with probability  $> \theta$, where 
$\theta \in  (0,1)$ is arbitrary but fixed in advance''  
 (note that, in particular, the threshold value of $k$ could then depend on $\theta$). 
 However, much stronger assertions are in fact true, for example the probability of the 
 exceptional set in Theorem \ref{maintheorem} can be majorized by $\exp(-c'm)$. 
Another comment: 
one only uses in the proof that $m$ and $d$ are comparable, and larger than $ck^2$.

The proof will be based on separately majorizing $S_{\min} ( \Phi \otimes \bar{\Phi} )$, 
which is done via a well-known and relatively simple trick, 
and on minorizing $S_{\min} (\Phi) = S_{\min} (\bar{\Phi})$, which is the main point of the argument.  

\smallskip

A question analogous  to \eqref{additivity-smin} can be asked for the minimal 
output $p$-R\'enyi entropy $(p>1)$. For the
additivity of R\'enyi entropy, random counterexamples were constructed earlier 
by Hayden--Winter \cite{hw}. It was shown in \cite{asw} that the Hayden--Winter 
analysis can also be simplified (at least conceptually) by appealing to Dvoretzky's theorem.
Working with the von Neumann entropy, however, requires more effort. First, while 
\cite{asw} relied on a straightforward instance of Milman's ``tangible'' version \cite{milman, FLM} of 
Dvoretzky's theorem 
for Schatten classes that was documented in the literature already  in the 1970's, 
we now need a more subtle,  sharp version (which appears in the literature only 
implicitly). Second, this sharp version is not applied in the most direct way 
and requires additional preparatory work (for which we 
mostly follow the approach of Brandao--Horodecki~\cite{bh}).

\section*{Lower bound for  $S_{\min} (\Phi)$ : the approach}

Since we are going to consider channels with near-maximal minimal output entropy, the following simple inequality
(Lemma III.1 in \cite{bh}, or formula (40) in \cite{hastings}) will allow to replace 
the analysis of the von Neumann entropy $S$ by that of a smoother quantity.

\begin{lem}
For every state $\sigma \in \cD(\C^k)$,
\vskip-2mm
\[ S(\sigma) \geq S\left(\frac{\Id}k\right) -k \left\|\sigma-\frac{\Id}k\right\|_{HS}^2  . \] 
Consequently, for every quantum channel $\Phi : \M_m \to \M_k$,
\begin{equation} \label{Smin-HSmax} S_{\min}(\Phi) \geq \log (k) -k \cdot \max_{\rho \in \cD(\C^m)} \left\|\Phi(\rho)-\frac{\Id}k\right\|_{HS}^2. \end{equation}
\end{lem}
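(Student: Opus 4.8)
The plan is to prove the first inequality by a direct computation using the spectral decomposition of $\sigma$ and elementary properties of the entropy function, and then to deduce the channel statement \eqref{Smin-HSmax} as an immediate corollary. Write $\sigma = \sum_{i=1}^k \lambda_i \ketbra{e_i}{e_i}$ in an orthonormal eigenbasis, so that $\sum_i \lambda_i = 1$ and $\lambda_i \ge 0$. Then $S(\sigma) = -\sum_i \lambda_i \log \lambda_i = \sum_i f(\lambda_i)$ where $f(t) = -t\log t$, and $S(\Id/k) = \log k = \sum_i f(1/k)$. Also $\|\sigma - \Id/k\|_{HS}^2 = \sum_i (\lambda_i - 1/k)^2$, since the matrix $\sigma - \Id/k$ is diagonal in the same basis. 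So the claim reduces to the purely scalar inequality $\sum_i \big[f(\lambda_i) - f(1/k)\big] \ge -k \sum_i (\lambda_i - 1/k)^2$, which it suffices to verify termwise: $f(t) - f(1/k) \ge -k\,(t - 1/k)^2$ for all $t \in [0,1]$.

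For the termwise bound I would use concavity of $f$ together with a second-order (Taylor-with-remainder) estimate around the point $1/k$. Since $f''(t) = -1/t$, the function $g(t) := f(t) + k t^2/2$ has $g''(t) = -1/t + k \le 0$ on $[1/k, 1]$ but is convex on $(0,1/k)$, so a naive single Taylor expansion does not immediately close. The cleaner route is to note $f(t) - f(1/k) - f'(1/k)(t-1/k) = -\int_{1/k}^t (t-s)\frac{ds}{s}$, and bound $1/s$ on the relevant range; on $[1/k,1]$ one has $1/s \le k$, giving the remainder $\ge -k (t-1/k)^2/2$, while on $(0,1/k)$ the integrand has the opposite sign and the remainder is in fact nonnegative. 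Summing over $i$, the linear term $\sum_i f'(1/k)(\lambda_i - 1/k) = f'(1/k)\big(\sum_i \lambda_i - 1\big) = 0$ vanishes because both $\sigma$ and $\Id/k$ have trace one, and one is left with $S(\sigma) - \log k \ge -\tfrac{k}{2}\sum_i(\lambda_i-1/k)^2 \ge -k\|\sigma - \Id/k\|_{HS}^2$ (the factor $1/2$ being absorbed, so the stated constant is not even tight).

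The second assertion \eqref{Smin-HSmax} follows with no further work: apply the first inequality to $\sigma = \Phi(\rho)$ for an arbitrary $\rho \in \cD(\C^m)$ (note $\Phi(\rho)$ is indeed a state since $\Phi$ is a channel), obtaining $S(\Phi(\rho)) \ge \log k - k\|\Phi(\rho) - \Id/k\|_{HS}^2 \ge \log k - k\max_{\rho}\|\Phi(\rho) - \Id/k\|_{HS}^2$, and then take the minimum over $\rho$ on the left-hand side.

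I do not anticipate a genuine obstacle here; the only mild subtlety is handling the remainder term on $(0,1/k)$ where $1/s$ is unbounded — but there the sign works in our favour, so it can simply be discarded. An alternative, perhaps slicker, argument avoiding the case split is to invoke the well-known bound relating von Neumann entropy to the Hilbert–Schmidt distance from the maximally mixed state via the relative entropy $S(\Id/k) - S(\sigma) = D(\sigma \,\|\, \Id/k) - \tr(\sigma - \Id/k)\log k$ combined with Pinsker-type estimates, but the termwise Taylor argument above is the most self-contained.
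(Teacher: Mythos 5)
The paper itself does not prove this lemma; it cites Lemma III.1 in \cite{bh} and formula (40) in \cite{hastings}. Your overall strategy --- spectral decomposition, termwise Taylor expansion of $f(t)=-t\log t$ around $1/k$, the linear term vanishing upon summing over the spectrum because both states have unit trace --- is natural and sound, and the deduction of \eqref{Smin-HSmax} from the first inequality is fine. However, there is a genuine error in your handling of the range $t\in(0,1/k)$. The Taylor remainder
\[
R(t) \;=\; f(t)-f(1/k)-f'(1/k)(t-1/k)\;=\;-\int_{1/k}^{t}\frac{t-s}{s}\,ds
\]
is \emph{not} nonnegative for $t<1/k$; by concavity of $f$ one has $R(t)\leq 0$ for all $t$, with for instance $R(0)=-1/k<0$. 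So the claim that ``the sign works in our favour, so it can simply be discarded'' is false, and as written your argument does not close on $(0,1/k)$.

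The stated $1/2$-improvement is also unavailable termwise on $(0,1/k)$. With the substitution $a=1/(kt)$ one finds $R(t)=t(\log a - a +1)$ and $-k(t-1/k)^2=-t(a-1)^2/a$, so the bound $R(t)\geq -\tfrac{k}{2}(t-1/k)^2$ is equivalent to $\log a\geq \tfrac{a}{2}-\tfrac{1}{2a}$, which fails for $a$ large (i.e. $t$ small). In fact the constant $k$ in the lemma is asymptotically sharp: for $\sigma$ with spectrum $(0,\tfrac{1}{k-1},\dots,\tfrac{1}{k-1})$ one computes $\big(\log k - S(\sigma)\big)/\|\sigma-\Id/k\|_{HS}^2 = k(k-1)\log\tfrac{k}{k-1}\approx k-\tfrac12$. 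The good news is that the termwise bound with constant $k$ (not $k/2$) \emph{does} hold on all of $(0,1]$: on $(0,1/k)$ it reduces, via the same substitution, to the elementary inequality $\log a\geq 1-1/a$. Replacing your sign argument on $(0,1/k)$ by this one-line computation, and dropping the claim about the factor $1/2$, repairs the proof.
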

%
%

It will be convenient to identify $\C^k \otimes \C^d$ (or, to be more precise, 
$\C^k \otimes \overline{\C^d}$ --- a distinction we will ignore) with $\M_{k,d}$ via the canonical map induced by $u\otimes v \to \ketbra{u}{v}$. If $x \in \C^k \otimes \C^d$ is so identified with a matrix $M \in \M_{k,d}$, then 
\begin{equation}\label{PhiMatrix}
\tr_{\C^d} \ketbra{x}{x} = MM^\dagger .
\end{equation}
Via this identification, Schmidt coefficients of $\ket{x}$ coincide with singular values of $M$. While the tensor
and matrix formalisms are equivalent,  the matrix formalism is arguably more transparent, which 
sometimes leads to simpler arguments.  

Denote by $\cW \subset \C^k \otimes \C^d$ the subspace inducing $\Phi$.  
Note that the maximum in   \eqref{Smin-HSmax} is necessarily attained 
on pure states which, in this identification, correspond to unit vectors  $x \in \cW$. 
For such states the action of $\Phi$ is given --- in the matrix formalism --- by  \eqref{PhiMatrix}, 
and so  the inequality \eqref{Smin-HSmax} can be rewritten as
\begin{equation} S_{\min}(\Phi) \geq \log(k) -k \cdot \max_{M \in \cW, \, \|M\|_{HS}=1} \left\|MM^\dagger-\frac{\Id}k\right\|_{HS}^2.
\label{Smin-HSmax-matrix} 
\end{equation}
The idea will be to show that, for a random subspace $\cW$, the maximum on the right is very small; 
this will be formalized in the next proposition.

\section*{The main proposition and the derivation of the main theorem}

The heart of the argument is the following proposition

\begin{prop} \label{mainprop}
There are absolute constants $c,C,C'>0$ so that for every 
$k$, for $d = Ck^2$  and $m = cd$, 
a random Haar-distributed subspace $\cW$ of dimension $m$ in $\M_{k,d}$ satisfies
\begin{equation}\label{bound}
\max_{M \in \cW, \|M\|_{HS}=1} \left\|MM^\dagger-\frac{\Id}{k} \right\|_{HS} \leq \frac{C'}{k} 
\end{equation}
with large probability (tending to $1$ when $k$ tends to $\iy$).
\end{prop}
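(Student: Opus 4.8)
The plan is to reduce the estimate \eqref{bound} to a sharp form of Dvoretzky's theorem for the Schatten norm $\|\cdot\|_\infty$ on $\M_k$, applied to the random subspace $\cW$. First I would observe that for a matrix $M \in \M_{k,d}$ with $\|M\|_{HS} = 1$, the quantity to control is $\|MM^\dagger - \Id/k\|_{HS}$, and that $MM^\dagger$ is a density matrix on $\C^k$ (since $\tr MM^\dagger = \|M\|_{HS}^2 = 1$). The key reduction is that one should \emph{not} try to control the full Hilbert--Schmidt distance directly; instead, expand
\begin{equation}
\left\|MM^\dagger - \frac{\Id}{k}\right\|_{HS}^2 = \|MM^\dagger\|_{HS}^2 - \frac1k = \tr (MM^\dagger)^2 - \frac1k,
\end{equation}
so the bound \eqref{bound} is equivalent to $\tr(MM^\dagger)^2 \le 1/k + C'^2/k^2$, uniformly over the unit $HS$-sphere of $\cW$. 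Since $\tr(MM^\dagger)^2 = \sum \lambda_i^2$ where $\lambda_i$ are the eigenvalues of $MM^\dagger$ (equivalently, the squared singular values of $M$) summing to $1$, and $\|MM^\dagger\|_\infty = \lambda_{\max} = \|M\|_\infty^2$, we have the elementary bound $\tr(MM^\dagger)^2 \le \|M\|_\infty^2 \cdot \sum\lambda_i = \|M\|_\infty^2$. Thus it suffices to show that a random $m$-dimensional $\cW \subset \M_{k,d}$ satisfies $\|M\|_\infty \le C''/\sqrt{k}$ for all $M \in \cW$ with $\|M\|_{HS}=1$; but this only gives $\tr(MM^\dagger)^2 \le C''^2/k$, which is off by a constant factor from what is needed. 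So the crude operator-norm bound is not enough, and one genuinely needs that on the random subspace the singular values of $M$ are \emph{spread out} — i.e. $\|M\|_\infty$ is close to the ``flat'' value $1/\sqrt{k}$, not merely $O(1/\sqrt k)$.

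This is exactly where the sharp Dvoretzky theorem enters. On $\M_{k,d}$ equipped with the Hilbert--Schmidt norm as the ``Euclidean'' structure, consider the norm $N(M) = \|M\|_\infty$ (rescaled). The Dvoretzky dimension of $(\M_{k,d}, \|\cdot\|_\infty)$ is of order $k^2$ (this is the classical Schatten-class computation underlying \cite{milman, FLM}): the mean of $\|G\|_\infty$ for a Gaussian $G$ with i.i.d. entries of HS-normalization is $\approx \sqrt{d}/\sqrt{kd} \cdot(\text{const})$, while the Lipschitz constant of $\|\cdot\|_\infty$ w.r.t. $\|\cdot\|_{HS}$ is $1$, giving critical dimension $k^2 \cdot (M/b)^2 \sim k^2$ up to the ratio of mean to Lipschitz constant. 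The \emph{sharp} version says that once $m \le \e^2 \cdot k^2$ for small $\e$, a random $m$-dimensional subspace $\cW$ satisfies, for all $M \in \cW$,
\begin{equation}
(1-\e)\, M_k \,\|M\|_{HS} \le \|M\|_\infty \le (1+\e)\, M_k\, \|M\|_{HS},
\end{equation}
where $M_k = \E \|G\|_\infty$ (Gaussian average, suitably normalized). The point is that choosing $d = Ck^2$ with $C$ a large constant forces $M_k$ itself to be essentially $1/\sqrt{k}$ up to $(1+o(1))$: with $d \gg k$, a Haar-random unit vector in $\M_{k,d}$ (equivalently $M = G/\|G\|_{HS}$) has singular values governed by a Marchenko--Pastur law with ratio $k/d \to 0$, so all $k$ squared singular values concentrate around $1/k$ and $\|M\|_\infty = (1+O(\sqrt{k/d}))/\sqrt k = (1+O(1/\sqrt C))/\sqrt k$. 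Combining: on the random subspace, $\|M\|_\infty \le (1+\e)(1+O(1/\sqrt C))/\sqrt k$ for all unit-$HS$ vectors $M \in \cW$, hence $\tr(MM^\dagger)^2 \le \|M\|_\infty^2 \le (1 + \delta)/k$ with $\delta$ as small as we wish by taking $C$ large and $m = cd$ with $c = c(\e)$ small. That is \emph{still} only $\tr(MM^\dagger)^2 \le (1+\delta)/k$, whereas \eqref{bound} demands $\le 1/k + C'^2/k^2 = (1/k)(1 + C'^2/k)$, i.e. $\delta = O(1/k)$, not merely $\delta$ small.

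So the real work — and the main obstacle — is that the one-sided operator-norm bound loses a genuine constant, and one must instead control $\|MM^\dagger - \Id/k\|_{HS}$ directly via a \emph{two-sided} sharp Dvoretzky statement applied not to $\|\cdot\|_\infty$ alone but effectively to the whole ``spectral profile'' of $M$. The clean way I would carry this out: let $f(M) = \|MM^\dagger - \Id/k\|_{HS}$, a $2$-Lipschitz function of $M \in S_{HS}(\M_{k,d})$ (each of $M \mapsto MM^\dagger$ is $2$-Lipschitz on the unit sphere). By Gaussian/sphere concentration on $\M_{k,d}$, $f$ concentrates around its median $\mu_k := \E f$; and because $d = Ck^2$ with $C$ large, the Marchenko--Pastur concentration gives $\mu_k \le C''/k$ (the typical matrix on the sphere is already nearly flat at scale $1/k$ because $k/d \sim 1/C$ is small — this is the quantitative input replacing the crude bound). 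Then one applies the concentration/net argument: $f$ restricted to the unit sphere of a random $m$-dimensional $\cW$ stays within $\e/k$ of $\mu_k$ provided $m$ is below the ``Dvoretzky dimension of $f$,'' which the Gaussian-width / Sudakov-minoration bookkeeping shows is $\gtrsim k^2$ for this particular $f$ (the deviation scale here is $1/k$, the Lipschitz constant is $O(1)$, the ambient dimension is $kd = Ck^3$, so the critical subspace dimension is $\sim kd \cdot (1/k)^2 \sim Ck$ — wait, that is only linear in $k$; the gain to $k^2$ comes from the much sharper, \emph{variance-scale} concentration of $f$ around $\mu_k$, of order $1/k^{3/2}$ or better, coming from the fact that $f$ is a \emph{smooth, nearly-quadratic} functional rather than merely Lipschitz — this is precisely the ``additional preparatory work'' and ``sharp version not applied in the most direct way'' flagged in the introduction). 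Concretely, I would follow Brandao--Horodecki: establish a bound on the Gaussian mean-width of the set $\{MM^\dagger - \Id/k : M \in S_{HS}(\M_{k,d})\}$ viewed inside $\M_k$, show it is $O(\sqrt{k}/k) = O(1/\sqrt k)$, and feed this, together with the $1/k$-scale concentration, into a Gaussian comparison (Gordon/Chevet-type) inequality to deduce that the supremum of $f$ over the random $\cW$ does not exceed $C'/k$ as soon as $m \le c d = c' k^2$. The place where care is genuinely required, and where the argument could break, is matching these scales: one needs the mean-width bound to beat $1/k$ after multiplication by $\sqrt m/\sqrt{kd} \sim \sqrt c /\sqrt k$, and verifying that the chosen $d = Ck^2$, $m = cd$ make all three quantities (median $\mu_k$, concentration width, and $\sqrt{m}\cdot(\text{mean width})/\sqrt{\text{ambient dim}}$) simultaneously $\le C'/k$ with the \emph{same} absolute constants is the crux of the proof.
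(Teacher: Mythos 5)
Your diagnosis of the obstacle is exactly right --- the direct bound $\tr(MM^\dagger)^2\leq\|M\|_\iy^2$ loses a constant factor, and a naive application of the Lipschitz Dvoretzky theorem to $\tilde g(M)=\|MM^\dagger-\Id/k\|_{HS}$ with Lipschitz constant $2$ and target accuracy $\e=1/k$ only produces a subspace of dimension $\sim kd\cdot(1/k)^2\sim k$, not $\sim k^2$. You also correctly identify, via Marchenko--Pastur, that the median of $\tilde g$ on the $HS$-sphere is $O(1/k)$. But the fix you then invoke --- ``variance-scale concentration, of order $1/k^{3/2}$, coming from $f$ being a smooth nearly-quadratic functional,'' supported by Gaussian mean-width and Sudakov considerations --- is left entirely unsubstantiated, and it is not the mechanism that makes the argument close. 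Theorem~\ref{dvoretzky-sharp} as stated only sees the Lipschitz constant, and you give no version of it that feeds on variance instead; this is a genuine gap.

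The actual missing idea (due to Brandao--Horodecki, and the heart of the paper's proof) is a \emph{local} improvement of the Lipschitz constant. On the set $\Omega=\{M\in S_{HS}:\|M\|_\iy\leq 3/\sqrt k\}$ the function $\tilde g$ is in fact $6/\sqrt k$-Lipschitz, because
\[
\bigl|\,\|MM^\dagger-\Id/k\|_{HS}-\|NN^\dagger-\Id/k\|_{HS}\,\bigr|\;\leq\;(\|M\|_\iy+\|N\|_\iy)\,\|M-N\|_{HS},
\]
and on $\Omega$ the prefactor is $\leq 6/\sqrt k$. Plugging $L=6/\sqrt k$ and $\e=1/k$ into Theorem~\ref{dvoretzky-sharp} gives dimension $c_0\,kd\,(\e/L)^2 = c_0 d/36\sim k^2$, exactly the needed order. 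To legitimize working on $\Omega$, one extends $\tilde g|_\Omega$ to a $6/\sqrt k$-Lipschitz circled function $g$ on all of $S_{HS}$, applies Theorem~\ref{dvoretzky-sharp} to $g$, and makes a \emph{separate, first} application of Theorem~\ref{dvoretzky-sharp} to the norm $\|\cdot\|_\iy$ with $\e=1/\sqrt k$ (median $\leq 2/\sqrt k$ by Corollary~\ref{norm4thmoment}(b)) to guarantee that $S_{HS}\cap E\subset\Omega$ with large probability, whence $g=\tilde g$ there. Your write-up gets to the brink of this --- you even compute the correct median for $\|\cdot\|_\iy$ and for $\tilde g$ --- but without the local Lipschitz bound and the double application of Dvoretzky, the argument does not yield the claimed $m\sim k^2$ and the proof does not close.
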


From the proposition one quickly deduces that the pair $(\Phi, \bar{\Phi})$ is a 
counterexample to the additivity of minimum output
 von Neumann entropy. Indeed, a straightforward calculation shows that 
 applying $\Phi \otimes \bar{\Phi}$  to the maximally entangled state
yields an output state with one eigenvalue greater than or equal to $\frac{\dim \cW}{\dim \M_{k,d}}=\frac m{kd}=\frac c k$ (\cite{hw}, Lemma III.3; see also section 6 in \cite{cn}). 
Then, a simple argument using just concavity of $S(\cdot)$ reduces the problem to calculating 
the entropy of the state with one eigenvalue {\sl equal } to $\frac c k$ and all the remaining ones 
identical, which yields
\[ S_{\min}(\Phi \otimes \bar{\Phi}) \leq 2 \log{k} - \frac{c \log{k}}{k} +\frac{1}{k}
\]
On the other hand, equation \eqref{Smin-HSmax-matrix} together with  Proposition \ref{mainprop} implies
\[ S_{\min}(\Phi) \geq \log(k)-\frac{C'^2}{k}. \]
Since $S_{\min}(\bar{\Phi})=S_{\min}(\Phi)$, the inequality of Theorem \ref{maintheorem}  
follows if $k$ is large enough, as required.

\section*{Dvoretzky's theorem : take one}

We wish to point out that while Proposition \ref{mainprop} will be {\em derived from} a 
 Dvoretzky-like theorem for Lipschitz functions  (Theorem \ref{dvoretzky-sharp} below), 
 it can be {\em rephrased} in the language of the standard Dvoretzky's theorem. 
 Indeed,  its assertion says that for every $M \in \cW$ with $ \|M\|_{HS}=1$ we have 
\begin{equation} \label{4thmoment}
\frac{C^2}{k^2} \geq \left\|MM^\dagger-\frac{\Id}{k} \right\|_{HS}^2 = 
\tr |M|^4 - \frac{2 \tr MM^\dagger}k +\frac {\tr{\Id}}{k^2} =\tr |M|^4 -\frac 1k \geq 0 .
\end{equation}
Consequently,
\begin{equation} 
k^{-1/4}\|M\|_{HS} \leq \|M\|_4 \leq k^{-1/4} \Big( 1 + \frac {C^2}k\Big)^{1/4}\|M\|_{HS} 
\leq k^{-1/4} \Big( 1 + \frac {C^2}{4k}\Big)\|M\|_{HS} 
\label{dvor1} 
\end{equation}
for all $M \in \cW$.  In other words, 
$\cW$ is $(1 +\delta)$-Euclidean, with $\delta = \frac {C^2}{4k}$, when considered 
as a subspace  of the normed space $\big(\M_{k,d}, \|\cdot\|_4\big)$, the 
Schatten $4$-class. 

In our prior work \cite{asw} we similarly observed that the crucial technical step of the 
Hayden-Winter proof of non-additivity of $p$-R\'enyi entropy for $p>1$ can be 
restated as an instance of Dvoretzky's
theorem for the Schatten $2p$-class. 
There is an important difference, however.  While in the case of $p$-R\'enyi entropy 
the needed Dvoretzky-type statement was known since the 1970s, for the statement 
of the type \eqref{dvor1} needed in the present context, the ``off the shelf'' methods seem to yield only 
$\delta = O(k^{-1/4})$ as opposed to $\delta = O(k^{-1})$  above.  
This also suggests that while for the $p$-R\'enyi entropy derandomization of the 
example --- i.e., supplying {\sl explicit } channels for which the additivity fails ---  
may be a feasible project (see section IX in \cite{asw} and references therein), the analogous 
task for the von Neumann  entropy is~likely~to~be~much~harder.

\section*{Dvoretzky's theorem : take two}

We use the following definitions: if $f$ is a function from a metric space $(X,d)$ to $\R$, and $\mu \in \R$, 
the {\itshape oscillation} of $f$ around $\mu$ on a subset $A \subset X$ is
\[ \osc(f,A,\mu) = \sup_{A} |f-\mu|. \]
A function $f$ defined on the unit sphere $S_{\C^n}$  is called {\itshape circled} 
if $f(e^{i\theta}x)=f(x)$ for any $x \in S_{\C^n}, \theta \in [0,2\pi]$. 
If $X$ is a real random variable, we will say that $\mu$ is a {\em central value} of $X$ if 
$\mu$ is either the mean of $X$, or any number between the 1st and the 3rd quartile of $X$
(i.e., if $\min \{\P(X\geq\mu), \P(X\leq\mu)\} \geq \frac 14$; 
this happens in particular if $\mu$ is the median of $X$).
\vskip 3mm
We will need the following variant of Milman's ``tangible'' version of~Dvoretzky's~theorem.

\begin{thm}[Dvoretzky's theorem for Lipschitz functions]
\label{dvoretzky-sharp}
If $f : S_{\C^n} \to \R$ is a 1-Lipschitz circled function, then for every $\e >0$, 
if $E \subset \C^n$ is a random subspace (Haar-distributed) of dimension $c_0 n \e^2$, we have with large probability
\[ \osc(f,S_{\C^n} \cap E,\mu) \leq \e ,\]
where $\mu$ is any central value of $f$  
(with respect to the normalized Lebesgue measure on $S_{\C^n}$) 
and $c_0$ is an absolute constant.
If the function is  $L$-Lipschitz, the dimension changes to $c_0 n (\e/L)^2$.
\end{thm}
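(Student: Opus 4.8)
The plan is to run Milman's concentration-of-measure proof of Dvoretzky's theorem, extracting from it the \emph{unimproved} dependence $\dim E\asymp n(\e/L)^{2}$ (no logarithmic loss). Replacing $f$ by $f/L$ we may assume $L=1$, and we may assume $\e\geq c_{1}n^{-1/2}$, since otherwise $c_{0}n\e^{2}<1$ and there is nothing to prove. The basic tool is L\'evy's lemma: identifying $S_{\C^{n}}$ with $S_{\R^{2n}}$, every $1$-Lipschitz $g\colon S_{\C^{n}}\to\R$ obeys $\P(|g-\operatorname{med}(g)|>t)\leq C\exp(-cnt^{2})$. Applied to $g=f$ it shows that any central value $\mu$ of $f$ lies within $C'n^{-1/2}$ of $\operatorname{med}(f)$ (a point between the quartiles of $f$ is within $O(n^{-1/2})$ of its median); since $C'n^{-1/2}\leq\e/2$ in the relevant range, we may and do replace $\mu$ by $\operatorname{med}(f)$.

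Fix a reference subspace $E_{0}$ with an orthonormal basis; a Haar subspace $E$ of dimension $\kappa:=c_{0}n\e^{2}$ is then $E=V(E_{0})$ for $V$ Haar-distributed in the complex Stiefel manifold $\mathcal{S}$ of isometries $E_{0}\to\C^{n}$, and $\osc(f,S_{\C^{n}}\cap E,\mu)=\Omega(V):=\sup_{x\in S_{\C^{n}}\cap E_{0}}|f(Vx)-\mu|$. As $f$ is $1$-Lipschitz and $V$ an isometry, $|\Omega(V)-\Omega(V')|\leq\sup_{x}\|(V-V')x\|=\|V-V'\|_{\iy}\leq\|V-V'\|_{HS}$, so $\Omega$ is $1$-Lipschitz on $\mathcal{S}$ for the Hilbert--Schmidt metric; since $\mathcal{S}$ is a homogeneous quotient of $U(n)$ and inherits the latter's concentration at scale $n^{-1/2}$ (the Ricci curvature of $U(n)$ being of order $n$), we get $\P(\Omega>\E\Omega+t)\leq C\exp(-cnt^{2})$. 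It thus suffices to prove $\E\Omega\leq\e/2$.

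This estimate is the heart of the argument. For a single $x$ the vector $Vx$ is uniform on $S_{\C^{n}}$, so $f(Vx)-\mu$ is centered up to $O(n^{-1/2})$ and sub-gaussian with parameter $\asymp n^{-1}$. A one-scale net argument — a $\delta$-net of $S_{\C^{n}}\cap E_{0}$ of size $\leq(3/\delta)^{2\kappa}$, a union bound, a $1$-Lipschitz extension — gives only $\E\Omega\lesssim\sqrt{(\kappa/n)\log(1/\e)}$, with a spurious logarithmic factor. To remove it one chains: bounding $\Omega(V)\leq\sup_{x}(f(Vx)-\mu)_{+}+\sup_{x}(\mu-f(Vx))_{+}$ and chaining each supremum along dyadic nets $\cN_{j}$ of $S_{\C^{n}}\cap E_{0}$ at scales $2^{-j}$, one uses at each scale that (i) the increments $f(Vx)-f(Vy)$ with $\|x-y\|\leq2^{-j}$ are centered and — far better than their Lipschitz bound suggests — have sub-gaussian parameter $\asymp\|x-y\|^{2}n^{-1}$, which comes from conditioning on $Vx=p$ (exhibiting $f(Vx)-f(Vy)$ as a $\|x-y\|$-Lipschitz function of a uniform point $w$ on the $(2n-2)$-dimensional sphere $S_{\C^{n}}\cap p^{\orth}$) together with a second-order, spherical-average estimate for the residual conditional mean $p\mapsto\E_{w}[f(p)-f(\langle x,y\rangle p+\sqrt{1-|\langle x,y\rangle|^{2}}\,w)]$; and (ii) $\log|\cN_{j}|\lesssim\kappa j$. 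Summing the scales,
\[
\E\Omega\ \lesssim\ \sum_{j\geq1}2^{-j}\sqrt{\frac{\kappa j}{n}}\ \asymp\ \sqrt{\frac{\kappa}{n}}\ =\ \sqrt{c_{0}}\,\e,
\]
with no logarithmic factor, so a small enough absolute constant $c_{0}$ gives $\E\Omega\leq\e/2$. Combined with the previous paragraph, $\Omega\leq\e$ with probability $\geq1-C\exp(-c'\kappa)=1-C\exp(-c''n\e^{2})$, which tends to $1$ — and is $\geq1-\exp(-c'''m)$ in the regime of the application.

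The main obstacle is exactly the sharp increment estimate in (i): that for a \emph{general} $1$-Lipschitz $f$, the difference $f(Vx)-f(Vy)$ — not merely the trivial bound $\|x-y\|$ on it — has sub-gaussian parameter of the optimal order $\|x-y\|^{2}n^{-1}$ uniformly in $(x,y)$; the soft arguments give only $\min(\|x-y\|^{2},n^{-1})$, which would leave a $\sqrt{\log n}$ loss. This is the point at which the proof leaves the ``off-the-shelf'' version of Dvoretzky's theorem behind; everything else — L\'evy's lemma, the passage to a random frame, the Lipschitz and covering-number bookkeeping, and the summation over scales — is routine. The circled hypothesis serves only to let one regard $f$ as a function on $\C P^{n-1}$, so that the nets $\cN_{j}$ live on a genuinely $\kappa$-dimensional object; it plays no part in the concentration estimates.
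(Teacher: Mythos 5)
Your overall strategy matches the paper's: pass to a Lipschitz function on the isometry/unitary group, reduce to a sub-gaussian increment estimate for the process $s\mapsto f(Us)$ on the $\kappa$-dimensional sphere, and chain (Dudley) to remove the spurious $\sqrt{\log(1/\e)}$. You also correctly identify that the whole proof hinges on the increment estimate $\P(|f(Ux)-f(Uy)|>\lambda)\leq C\exp(-cn\lambda^2/|x-y|^2)$, uniformly in $x,y$.

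But your proposed proof of that estimate has a genuine gap, and it is exactly at the one nontrivial point. You condition on $Vx=p$, observe that $f(Vx)-f(Vy)$ is then a $\|x-y\|$-Lipschitz function of a uniform $w\in S_{\C^n}\cap p^\perp$, and say the remaining ``residual conditional mean'' $m_p=\E_w\bigl[f(p)-f(\langle x,y\rangle p+\sqrt{1-|\langle x,y\rangle|^2}\,w)\bigr]$ is handled by ``a second-order, spherical-average estimate.'' That estimate is not supplied and, for a general $1$-Lipschitz $f$, the quantity $m_p$ is \emph{not} $O(\|x-y\|^2)$ pointwise in $p$. For instance, if $f(v)=|v-p_0|$ and $p=p_0$, then writing $\alpha=\langle x,y\rangle$, $\beta=\sqrt{1-|\alpha|^2}$, one has $m_{p_0}=-\E_w|(\alpha-1)p_0+\beta w|=-\sqrt{|\alpha-1|^2+\beta^2}\geq\beta\asymp\|x-y\|$ — order $\|x-y\|$, not $\|x-y\|^2$. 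So some further argument (tail bound for $m_p$ as a function of random $p$, say) would be needed, and you don't give one. There is a second, related problem: when $\langle x,y\rangle$ has nonzero imaginary part one already has $|1-\langle x,y\rangle|\asymp\|x-y\|$ rather than $\|x-y\|^2$, which ruins even the crude comparison $|f(p)-f(\langle x,y\rangle p)|\lesssim\|x-y\|^2$.

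The paper sidesteps both problems with one clean device which is absent from your sketch. Using the circled hypothesis, rotate $y\mapsto e^{i\theta}y$ so that $\langle x,y\rangle$ is real nonnegative; this makes $x+y$ and $x-y$ orthogonal and minimizes $|x-y|$. Then condition on $u=U\bigl(\frac{x+y}{2}\bigr)$ rather than on $Ux$. Conditionally, $U\bigl(\frac{y-x}{|y-x|}\bigr)=:v$ is uniform on $S_{u^\perp}$, and $f(Ux)-f(Uy)=f(u+\beta v)-f(u-\beta v)=:f_u(v)$ is $2\beta$-Lipschitz in $v$ \emph{and exactly odd}: $f_u(-v)=-f_u(v)$, so its conditional mean is $0$ identically in $u$. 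L\'evy's lemma then gives the sub-gaussian bound directly, with no residual to control. This is why your final remark — that the circled hypothesis ``plays no part in the concentration estimates'' and merely lets one work on $\C P^{n-1}$ — is incorrect: circledness is used precisely to make $\langle x,y\rangle$ real so that the midpoint trick produces an odd function. The rest of your argument (covering numbers, chaining, passing from $\E\Omega$ to a tail bound via concentration on $U(n)$ or the Stiefel manifold) is sound and parallels the paper; the missing piece is the antisymmetric conditioning.
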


A striking application of the theorem above is to the case when $f$ is the gauge function 
of a convex body, or a norm: it leads to the fact that any high-dimensional convex 
body has almost spherical sections.

At the heart of Dvoretzky-like phenomena lies the concentration of measure, which in our 
framework is expressed by

\begin{lem}[L\'evy's lemma \cite{levy}] \label{levy}
If $f:S^{n-1} \to \R$ is a $1$-Lipschitz function, then for every $\e>0$,
\[ \P ( |f(x)-\mu| > \e ) \leq C_1 \exp(-c_1n \e^2) ,\]
where $x$ is uniformly distributed on $S^{n-1}$,  $\mu$ is 
any central value of $f$, 
and $C_1,c_1>0$ are absolute constants.
\end{lem}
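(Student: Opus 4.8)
The plan is to derive L\'evy's lemma from the isoperimetric inequality on the Euclidean sphere, following the classical argument of P. L\'evy and E. Schmidt. The one external input --- which I would state rather than reprove --- is the spherical isoperimetric inequality: among all Borel subsets $A\subset S^{n-1}$ of prescribed normalized measure $\sigma(A)$, the geodesic $t$-neighborhood $A_t=\{x\in S^{n-1}:\mathrm{dist}(x,A)\leq t\}$ has smallest measure when $A$ is a spherical cap. Coupled with the elementary estimate $\sigma(S^{n-1}\setminus H_t)\leq C\exp(-cnt^2)$ for a hemisphere $H$ (which one checks directly: for $t\leq\pi/2$ one has $H_t=\{x_1\leq\sin t\}$, so that the complement has measure $\sigma(\{x_1>\sin t\})$, and then one uses $\sin t\geq\frac2\pi t$ together with the one-dimensional marginal estimate $\sigma(\{x_1>s\})\leq C\exp(-cns^2)$; the case $t>\pi/2$ is trivial since then $H_t=S^{n-1}$), this yields the concentration phenomenon in the convenient form: if $\sigma(A)\geq\frac12$, then $\sigma(A_t)\geq 1-C\exp(-cnt^2)$ for all $t\geq0$.

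The first step is to prove the lemma when $\mu=M$ is a median of $f$. Apply the displayed fact to $A=\{f\leq M\}$, which has measure $\geq\frac12$ by definition of the median. Since the Euclidean distance is dominated by the geodesic distance, a function that is $1$-Lipschitz for the former is $1$-Lipschitz for the latter, whence $A_t\subset\{f\leq M+t\}$; therefore $\P(f>M+t)\leq 1-\sigma(A_t)\leq C\exp(-cnt^2)$. Running the same argument with $\{f\geq M\}$ gives $\P(f<M-t)\leq C\exp(-cnt^2)$, and summing the two tail bounds proves $\P(|f-M|>\e)\leq 2C\exp(-cn\e^2)$, which is the assertion of the lemma for $\mu=M$.

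The second step removes the median assumption. Let $\mu$ be an arbitrary central value. By the median bound just established, $\P(f\geq M+t)<\frac14$ as soon as $t\geq C'/\sqrt n$ for a suitable absolute constant $C'$; since $\P(f\geq\mu)\geq\frac14$ for a central value, this forces $\mu\leq M+C'/\sqrt n$, and symmetrically $\mu\geq M-C'/\sqrt n$. (In the special case $\mu=\E f$ one argues instead by integrating the tail bound: $|\E f-M|\leq\E|f-M|=\int_0^\infty\P(|f-M|>t)\,dt\leq C''/\sqrt n$.) Thus $|\mu-M|\leq C'/\sqrt n$ in all cases, so $\P(|f-\mu|>\e)\leq\P(|f-M|>\e-C'/\sqrt n)$. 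If $\e\geq 2C'/\sqrt n$, the right-hand side is at most $2C\exp(-cn\e^2/4)$; if $\e<2C'/\sqrt n$, the target bound $C_1\exp(-c_1n\e^2)$ is at least $C_1\exp(-4c_1C'^2)$, which exceeds $1$ once $C_1$ is chosen large enough, so the inequality is trivial there. Absorbing both ranges into suitably readjusted absolute constants $C_1,c_1$ completes the proof.

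The only genuinely hard ingredient is the spherical isoperimetric inequality itself; everything else is bookkeeping with the median and with the central-value definition. As that inequality is classical and thoroughly documented --- indeed it is exactly what \cite{levy} supplies --- I would simply invoke it. If one wished to bypass isoperimetry, an alternative is to push the Gaussian concentration inequality on $\R^n$ forward through the normalization map $g\mapsto g/\|g\|_2$; the technical nuisance is that this map is not globally Lipschitz near the origin, so one must first restrict to $\|g\|_2$ of order $\sqrt n$ --- itself a concentration estimate --- which makes the isoperimetric route the cleaner one to present.
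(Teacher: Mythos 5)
The paper does not actually prove this lemma: it is stated with a bare citation to L\'evy's 1951 book, and the only accompanying commentary is the remark immediately below, namely that once the estimate holds for one value of $\mu$ it holds for any central value because all central values of a $1$-Lipschitz function on $S^{n-1}$ differ by at most $C/\sqrt n$. Your proof is correct and is exactly the classical argument that the cited reference supplies (isoperimetry on the sphere, the hemisphere estimate, then the median case); your second step, passing from the median to an arbitrary central value, reproduces the paper's remark verbatim in spirit, including the separate treatment of the mean by integrating the tail. So this is, in substance, the same approach the paper is invoking, just written out in full.
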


Results such as Theorem \ref{dvoretzky-sharp} or  L\'evy's lemma are usually 
stated with $\mu$ equal to the median or the mean of $f$. 
However, once we know that the result is true for {\em some} central value 
(or, for that matter, for {\em any} $\mu \in \R$), 
it holds {\em a posteriori } 
for {\em any}  such value (up to changes in the constants)  as,  for 
$1$-Lipschitz functions,  all central values differ at most by 
$C/\sqrt{n}$.

The obvious idea to prove Theorem \ref{dvoretzky-sharp} is to use  L\'evy's lemma and an $\e$-net argument ---
 using the fact that an $\e$-net in $S_{\C^n} = S^{2n-1}$ can be chosen to have cardinality $\leq (1+2/\e)^{2n}$ (see \cite{pisier},
Lemma 4.10). Indeed, this was essentially Milman's original argument in \cite{milman}. 
However,  one only obtains this way a 
subspace $E$ of dimension $c n \e^2/\log(1/\e)$. 
For many applications (including our previous paper \cite{asw}), this extra
logarithmic factor is not an issue. 
However, in the present case,  having the optimal dependence on $\e$ is crucial. 

\smallskip
The classical framework of convex geometry is the real case (with or without the assumption ``circled,'' which in that context just means then that the function is even). In that setting, Theorem 
\ref{dvoretzky-sharp} was proved by Gordon \cite{gordon} who used 
comparison inequalities for Gaussian processes. A proof based on concentration
of measure was later given by Schechtman \cite{schechtman}. The complex case does not seem to appear in the literature. Actually, at the face of it, 
Gordon's proof does not extend to the complex setting, while Schechtman's proof does. We sketch Schechtman's proof 
of Theorem \ref{dvoretzky-sharp} in Appendix A. It is not clear whether the assumption ``$f$ circled'' in Theorem \ref{dvoretzky-sharp} can be completely removed; we do know that it is needed at most for very small values of $\e$.

\section*{Proof of the main proposition}

Let $S_{HS}$ be the Hilbert--Schmidt sphere in $\M_{k,d}$ and let $M$ be a random matrix uniformly distributed on $S_{HS}$. Let $\tilde{g}(\cdot)$ be the function defined on $S_{HS}$ by
\[ \tilde{g}(M) = \left\| MM^\dagger - \frac{\Id}{k} \right\|_{HS}.\]

The next well-known lemma asserts that the singular values of a very rectangular random matrix are very concentrated. This is a familiar phenomenon in random matrix theory that goes back to 
\cite{MP}.  Versions of this lemma
 appeared in the QIT literature under the tensor formalism (see for example Lemma III.4 in \cite{hlw}). However, these versions
 typically introduce an unnecessary logarithmic factor which would imply that the main proposition holds with $d=Ck^2 \log k$
 instead of $d=Ck^2$.  For completeness, we include a proof of Lemma 
 \ref{concentration-spectrum} in Appendix B.

\begin{lem}
\label{concentration-spectrum}
There exist absolute constants $C,c>0$ such that, 
if $M$ is uniformly distributed on the Hilbert--Schmidt sphere in $\M_{k,d}$ ($d \geq C^2k$), then with probability larger than $1-\exp(-ck)$, 
\begin{equation} \label{spectrumbounded} \spec (MM^\dagger) \subset \left[ \left(\frac{1}{\sqrt{k}} - \frac{C}{\sqrt{d}}\right)^2, 
\left(\frac{1}{\sqrt{k}} + \frac{C}{\sqrt{d}}\right)^2 \right] .\end{equation}
\end{lem}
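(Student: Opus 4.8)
The plan is to realise $M$ as a normalised Gaussian matrix and to reduce the whole statement to a sharp, logarithm-free control of the extreme singular values of a rectangular Gaussian matrix. Write $M = G/\|G\|_{HS}$ with $G \in \M_{k,d}$ having independent standard complex Gaussian entries ($\E|G_{ij}|^2 = 1$); by unitary invariance $M$ is uniform on the Hilbert--Schmidt sphere, and moreover $M$ is independent of $\|G\|_{HS}$. The $k$ eigenvalues of $MM^\dagger$ are $\lambda_i = s_i(G)^2/\|G\|_{HS}^2$ (with $s_1(G) \ge \dots \ge s_k(G)$ the singular values of $G$), and they satisfy the a priori identity
\[ \sum_{i=1}^k \lambda_i = \tr(MM^\dagger) = \|M\|_{HS}^2 = 1 . \]
This normalisation is precisely what lets us avoid estimating $\|G\|_{HS}$ directly: it suffices to bound the \emph{ratio} $\lambda_1/\lambda_k = s_1(G)^2/s_k(G)^2$, which is a function of $G$ alone.

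Concretely, the core of the matter is to show that, with probability at least $1-\exp(-ck)$,
\[ \sqrt d - C_0\sqrt k \ \le\ s_k(G) \ \le\ s_1(G) \ \le\ \sqrt d + C_0\sqrt k \]
for a suitable absolute constant $C_0$. Granting this, on that event $\lambda_1/\lambda_k \le \bigl(\tfrac{\sqrt d + C_0\sqrt k}{\sqrt d - C_0\sqrt k}\bigr)^2 =: 1+\kappa$, and since the hypothesis $d \ge C^2 k$ makes $C_0\sqrt{k/d} \le C_0/C$ small, an elementary estimate gives $\kappa \le C_1\sqrt{k/d}$ for an absolute $C_1$. Combining $\lambda_1 \le (1+\kappa)\lambda_k$ with $\sum_i \lambda_i = 1$ and $\lambda_k \le \lambda_i \le \lambda_1$ forces every $\lambda_i$ into $\bigl[\tfrac{1}{k(1+\kappa)}, \tfrac{1+\kappa}{k}\bigr]$, and a direct comparison (using $d \ge C^2 k$ once more) shows this interval is contained in $\bigl[(\tfrac1{\sqrt k} - \tfrac{C}{\sqrt d})^2, (\tfrac1{\sqrt k} + \tfrac{C}{\sqrt d})^2\bigr]$ as soon as $C$ is chosen a large enough multiple of $C_1$. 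That is the assertion of the lemma.

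Proving the singular-value window is the main work, and it is exactly here that one cannot afford a crude $\e$-net over $S_{\C^k}\times S_{\C^d}$: a net argument loses a factor $\sqrt{\log(d/k)}$ in the deviation and, worse, does not produce the correct leading constant $1$ in front of $\sqrt d$, which would force $d \gtrsim k^2\log k$. Instead one should use the Gaussian-comparison route: the formulas
\[ s_1(G) = \max_{u\in S_{\C^k}}\max_{v\in S_{\C^d}} \Re\langle u, Gv\rangle , \qquad s_k(G) = \min_{u\in S_{\C^k}}\max_{v\in S_{\C^d}} \Re\langle u, Gv\rangle \]
present these quantities as min--max functionals of a centred Gaussian process on $S_{\C^k}\times S_{\C^d}$, and Gordon's min--max comparison inequality then yields $\E\, s_1(G)\le\sqrt d + C\sqrt k$ and $\E\, s_k(G)\ge\sqrt d - C\sqrt k$. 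Since $G\mapsto s_1(G)$ and $G\mapsto s_k(G)$ are $1$-Lipschitz on $(\M_{k,d},\|\cdot\|_{HS})$, Gaussian concentration of measure upgrades these to two-sided bounds with Gaussian tails; taking the deviation of order $\sqrt k$ yields the window with failure probability $\le\exp(-ck)$. The genuinely delicate point is Gordon's comparison in the \emph{complex} setting: one must check that the covariance of $(u,v)\mapsto\Re\langle u,Gv\rangle$ is dominated, in the Slepian--Gordon sense, by that of a split process $(u,v)\mapsto\Re\langle g,u\rangle+\Re\langle h,v\rangle$ with $g\in\C^k$ and $h\in\C^d$ independent standard complex Gaussian vectors --- a computation that is routine but must be carried out with some care because of the phase freedom in the complex inner products. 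Everything downstream (the interval arithmetic, the Lipschitz bound, the union bound) is soft.
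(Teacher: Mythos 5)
The proposal is correct as a programme, but it takes a genuinely different route from the paper, and its criticism of $\e$-net arguments is misdirected. The paper does not realise $M$ as a normalised Gaussian matrix, nor does it control extremal singular values of an unnormalised $G$. Instead it bounds $\|\Delta\|_{\iy}$ with $\Delta=MM^\dagger-\Id/k$ directly by a $\frac14$-net argument on the \emph{output} sphere $S_{\C^k}$ alone (not on $S_{\C^k}\times S_{\C^d}$), via the standard inequality $\|\Delta\|_\iy \leq 2\sup_{\bar x\in\mN}|\langle \bar x|\Delta|\bar x\rangle|$. Each quadratic form $\langle \bar x|\Delta|\bar x\rangle = |M^\dagger \bar x|^2-1/k$ is then controlled by L\'evy's lemma for the $1$-Lipschitz function $M\mapsto|M^\dagger \bar x|$ on the Hilbert--Schmidt sphere (real ambient dimension $2kd$). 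The net has cardinality only $\exp(O(k))$, while the L\'evy concentration rate at scale $\e=C/(5\sqrt d)$ is $\exp(-c\cdot kd\cdot\e^2)=\exp(-c'kC^2)$, so the union bound closes with $\exp(-ck)$ and no logarithmic loss. Thus the log factor and ``wrong leading constant'' you describe are artefacts of a double-sphere net aimed at $s_1(G), s_k(G)$; the single-sphere net applied to the recentred object $\Delta$ never sees this problem.

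Your approach -- reducing to the singular-value window of a Gaussian $G$ via $M=G/\|G\|_{HS}$, exploiting $\sum\lambda_i=1$ to work with the eigenvalue ratio $\lambda_1/\lambda_k$, and then invoking Gordon's min--max comparison plus Gaussian concentration -- is also sound, and the eigenvalue-ratio trick is a nice way to avoid estimating $\|G\|_{HS}$. But it shifts the genuine work to verifying the Slepian--Gordon covariance dominations in the complex setting, which you flag as ``routine but must be carried out with some care'' and leave unresolved; the paper's own discussion of Theorem \ref{dvoretzky-sharp} warns that Gordon's method is delicate over $\C$. The paper's route sidesteps this entirely and is, at the level of technology used, strictly more elementary: only a one-sphere net, L\'evy's lemma, and some interval arithmetic.
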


We note that inclusion \eqref{spectrumbounded} can be reformulated as follows: all singular values of $M$ differ from $1/\sqrt{k}$ by less than $C/\sqrt{d}$. (Recall that the singular values of $M$ correspond to the Schmidt coefficients
of a random pure state in $\C^k \otimes \C^d$.)

\smallskip
We will use in the sequel the following immediate corollary of Lemma 
 \ref{concentration-spectrum}.
 
 \begin{cor}
\label{norm4thmoment}
Under the hypotheses of Lemma  \ref{concentration-spectrum} and denoting $C_0=3C$\\
{\rm (a)}\ with probability larger than $1-\exp(-ck)$, all eigenvalues of 
$MM^\dagger$ differ from $1/k$ by less than $C_0/\sqrt{kd}$; 
consequently, the median (or any fixed quantile) of $\tilde{g}$ is bounded by $C_0/\sqrt{d}$ for $k$ large enough.\\
{\rm (b)}\ if $d \geq C^2k$, the median (or any fixed quantile) of $\|M\|_{\iy}$ is bounded by $2/\sqrt{k}$ for $k$ large
enough.
\end{cor}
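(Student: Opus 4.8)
The corollary is immediate, so the plan is short: I would work throughout on the event $\mathcal{E}$ of Lemma~\ref{concentration-spectrum}, which has probability at least $1-\exp(-ck)$, expand the squares in the spectral inclusion \eqref{spectrumbounded}, and only at the very end convert the resulting deterministic bounds into statements about quantiles. The hypothesis $d\geq C^2k$ will be used exactly once in each part, to absorb a lower-order term.

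For part (a), on $\mathcal{E}$ every eigenvalue $\lambda$ of $MM^\dagger$ lies in the interval $\left[\left(\tfrac1{\sqrt k}-\tfrac C{\sqrt d}\right)^2,\left(\tfrac1{\sqrt k}+\tfrac C{\sqrt d}\right)^2\right]$, so that $\left|\lambda-\tfrac1k\right|\leq\tfrac{2C}{\sqrt{kd}}+\tfrac{C^2}{d}$. Since $d\geq C^2k$ gives $C\sqrt{k/d}\leq1$, i.e. $\tfrac{C^2}{d}\leq\tfrac{C}{\sqrt{kd}}$, this deviation is at most $\tfrac{3C}{\sqrt{kd}}=\tfrac{C_0}{\sqrt{kd}}$, which is the first claim. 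Summing the squared deviations over the $k$ eigenvalues then gives $\tilde g(M)^2=\left\|MM^\dagger-\tfrac{\Id}{k}\right\|_{HS}^2\leq k\cdot\tfrac{C_0^2}{kd}=\tfrac{C_0^2}{d}$, hence $\tilde g(M)\leq C_0/\sqrt d$ on $\mathcal{E}$. For part (b), still on $\mathcal{E}$ one has $\|M\|_{\iy}=\sqrt{\lambda_{\max}(MM^\dagger)}\leq\tfrac1{\sqrt k}+\tfrac C{\sqrt d}$, and $\tfrac C{\sqrt d}\leq\tfrac1{\sqrt k}$ again by $d\geq C^2k$, so $\|M\|_{\iy}\leq 2/\sqrt k$.

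Finally, in both cases the relevant quantity is $\leq C_0/\sqrt d$ (resp. $\leq 2/\sqrt k$) with probability at least $1-\exp(-ck)$; since $1-\exp(-ck)\to1$ as $k\to\iy$, for $k$ large this probability exceeds any prescribed level $\theta\in(0,1)$, whence the $\theta$-quantile --- in particular the median --- of $\tilde g$ (resp. of $\|M\|_{\iy}$) lies below that value. There is no genuine obstacle; the one point requiring care is the absorption of the cross term, namely that the error $C^2/d$ produced by expanding the square is dominated by $2C/\sqrt{kd}$. This is precisely what the assumption $d\geq C^2k$ delivers, and it is also the reason one takes $C_0=3C$ rather than $2C$.
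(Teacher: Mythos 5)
Your proof is correct and is exactly the argument the paper treats as immediate: expand the squares in \eqref{spectrumbounded}, absorb the $C^2/d$ cross term using $d\geq C^2k$ to get $C_0=3C$, sum the $k$ squared deviations for the Hilbert--Schmidt bound, and take the square root of $\lambda_{\max}$ for the operator norm, then pass from high-probability bounds to quantiles. There is nothing to add.
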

\vskip-1mm
We point out that while we chose to present statements (a) and (b) above as consequences 
of Lemma  \ref{concentration-spectrum} for clarity and for ``cultural'' reasons 
(the lemma being familiar to the QIT community), more precise versions of 
these statements are available in (or can be readily deduced from) 
the random matrix literature. Re (a), the study of the distribution of $\tilde{g}$ 
is, by  \eqref{4thmoment}, equivalent to that of  $\tr |M|^4$, and a closed 
formula for the expected value of the latter is known (up to terms of smaller order, its value is   $1/k+1/d$); see, e.g., \cite{HT} (section 8) and its references. 
Re (b), sharp estimates on the tail of $\|M\|_{\iy}$ can also be found in  \cite{HT} (proof of Lemma 7.3), 
in particular every fixed quantile is $1/\sqrt{k}+1/\sqrt{d}$ up to terms of smaller order. 
This result can also be retrieved via methods of earlier papers \cite{geman, silver}, 
which focused on the real case. 

\smallskip
The function $\tilde{g}$ is $2$-Lipschitz on $S_{HS}$, and Corollary \ref{norm4thmoment}(a)  implies that the median of $\tilde{g}$ is
as small as we want for large $d$. However,  a direct application of Theorem \ref{dvoretzky-sharp}  
yields  only a bound of order $1/\sqrt{k}$ in \eqref{bound}. 
The trick --- already present
in the previous approaches --- is to exploit the fact that
$\tilde{g}$ has a much smaller Lipschitz constant when restricted to a certain large subset of $S_{HS}$.
As we will see, this bootstrapping argument is equivalent to applying Theorem  \ref{dvoretzky-sharp} {\sl twice}.

The following lemma appears in \cite{bh} with a rather long proof, but using the matrix formalism 
completely demystifies it.

\begin{lem}
\label{lipschitz-restriction}
The function $\tilde{g}$ is $6/\sqrt{k}$-Lipschitz when restricted to the set
\[ \Omega = \{ M \in S_{HS} \st \|M\|_{\iy} \leq 3/\sqrt{k}\} .\]
\end{lem}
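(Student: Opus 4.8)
The plan is to reduce, via the reverse triangle inequality, to estimating $\|MM^\dagger - NN^\dagger\|_{HS}$ for $M,N \in \Omega$, and then to linearize the quadratic map $M \mapsto MM^\dagger$ and invoke the ideal property of the Hilbert--Schmidt norm (i.e. that it absorbs operator-norm factors on either side).

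First, for arbitrary $M, N \in S_{HS}$ the reverse triangle inequality for $\|\cdot\|_{HS}$ gives
\[ |\tilde g(M) - \tilde g(N)| = \Big| \big\|MM^\dagger - \tfrac{\Id}{k}\big\|_{HS} - \big\|NN^\dagger - \tfrac{\Id}{k}\big\|_{HS} \Big| \leq \|MM^\dagger - NN^\dagger\|_{HS}, \]
so it suffices to bound the right-hand side by $(6/\sqrt k)\,\|M-N\|_{HS}$. Next I would use the algebraic identity
\[ MM^\dagger - NN^\dagger = M(M-N)^\dagger + (M-N)N^\dagger, \]
which exhibits the increment of the quadratic map as a sum of two terms, each \emph{linear} in $M-N$. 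Applying the triangle inequality together with the submultiplicativity estimates $\|AB\|_{HS} \leq \|A\|_\iy\|B\|_{HS}$ and $\|AB\|_{HS} \leq \|A\|_{HS}\|B\|_\iy$, and using $\|(M-N)^\dagger\|_{HS} = \|M-N\|_{HS}$ and $\|N^\dagger\|_\iy = \|N\|_\iy$, one obtains
\[ \|MM^\dagger - NN^\dagger\|_{HS} \leq \big(\|M\|_\iy + \|N\|_\iy\big)\,\|M-N\|_{HS}. \]
On $\Omega$ both operator norms are at most $3/\sqrt k$, so the Lipschitz constant of $\tilde g$ restricted to $\Omega$ is at most $6/\sqrt k$, as claimed.

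I do not expect any genuine obstacle here: the only idea is the decomposition $MM^\dagger - NN^\dagger = M(M-N)^\dagger + (M-N)N^\dagger$, which trades the quadratic nonlinearity of $\tilde g$ for a prefactor governed by the operator norms of $M$ and $N$, and controlling those norms is precisely the role of the set $\Omega$. This is also why, as remarked, the matrix formalism makes the statement transparent whereas the corresponding manipulation in the tensor language of \cite{bh} is opaque. One minor point worth recording is that the bound is to be read with respect to the metric induced on $\Omega \subset S_{HS}$, and that the argument compares pairs of points only, so no convexity or connectedness of $\Omega$ is needed; moreover $\Omega$ has large measure (it contains $\{\|M\|_\iy \leq 2/\sqrt k\}$, which by Corollary \ref{norm4thmoment}(b) has measure $\geq \tfrac12$ for large $k$), which is what makes the set useful in the subsequent bootstrapping step.
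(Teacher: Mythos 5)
Your proof is correct and follows essentially the same route as the paper's: the reverse triangle inequality, the decomposition $MM^\dagger - NN^\dagger = M(M-N)^\dagger + (M-N)N^\dagger$, and the ideal-property bound $\|AB\|_{HS} \leq \|A\|_\iy \|B\|_{HS}$ (and its mirror) to extract the factor $\|M\|_\iy + \|N\|_\iy \leq 6/\sqrt{k}$. The surrounding remarks about why $\Omega$ is the right set and why the matrix formalism clarifies the argument are accurate but not part of the proof proper.
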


\begin{proof}
The lemma is a consequence of the following chain of matrix inequalities
\begin{eqnarray*} 
\left\| MM^\dagger -  \frac{\Id}{k} \right\|_{HS} - \left\| NN^\dagger -   \frac{\Id}{k} \right\|_{HS} &\leq & 
\|MM^\dagger-NN^\dagger\|_{HS} \\
& \leq & \| M(M^\dagger-N^\dagger) + (M-N)N^\dagger \|_{HS} \\
& \leq & \| M \|_{\iy} \| M^\dagger - N^\dagger \|_{HS} + \|M-N\|_{HS} \|N^{\dagger}\|_{\iy} \\
& \leq & (\|M\|_{\iy} + \|N\|_{\iy}) \|M-N\|_{HS} 
\end{eqnarray*}
\end{proof}

The function $\|\cdot\|_{\iy}$ is $1$-Lipschitz on $S_{HS}$. By Corollary \ref{norm4thmoment}(b), its median is bounded by $2/\sqrt{k}$ for $d \geq C^2k$. 
(Note that L\'evy's lemma shows that the measure of the complement of $\Omega$ is very small.) 
An application of the standard Dvoretzky's theorem (i.e.,Theorem \ref{dvoretzky-sharp} for norms) to 
$f=\|\cdot\|_{\iy}$ with  $\mu$ equal to the median of $\|\cdot\|_{\iy}$
and with $\e=1/\sqrt{k}$ (note that the dimension of the ambient space is $n=kd$) 
shows that the intersection of $S_{HS}$ with a random subspace of dimension $cd$ in $\M_{k,d}$ is contained in $\Omega$ with large probability.

Let $g$ be a $6k^{-1/2}$-Lipschitz extension of $\tilde{g}_{|\Omega}$ to $S_{HS}$ --- in any metric space $X$, it is possible to extend any $L$-Lipschitz function $\tilde{h}$ defined on a subset $Y$ without increasing the Lipschitz constant; use, e.g., the formula
\[ h(x) = \inf_{y \in Y} \left[ \tilde{h}(y) + L\, {\rm dist} (x,y) \right].\]
This formula also guarantees that the extended function $g$ is circled. Since $g=\tilde{g}$ on most of $S_{HS}$, the median of $g$ (resp., $\tilde{g}$) is a central value of $\tilde{g}$ (resp., $g$).
We apply Theorem \ref{dvoretzky-sharp} to $g$ with $\e = 1/k$ and $L=6k^{-1/2}$ to get ($\mu$ being the median of $\tilde{g}$)
\[ \osc(g,S_{HS} \cap E,\mu) \leq 1/k . \]
on a random subspace 
$E \subset \M_{k,d}$ 
of dimension $m=c_0\cdot kd \cdot (k^{-1}/(6k^{-1/2}))^2=cd$. Using Corollary \ref{norm4thmoment}(a), we obtain that $\mu \leq 1/k$ for $d \geq (C_0 k)^2$. We then have
\[ \osc(g,S_{HS} \cap E,0) \leq 2/k . \]

If $S_{HS} \cap E \subset \Omega$ (which, as noticed before, holds with large probability), $g$ and $\tilde{g}$ coincide on $S_{HS} \cap E$ and therefore $\osc(\tilde{g},S_{HS} \cap E,0) \leq 2/k$. This completes the proof of Proposition \ref{mainprop} and hence that of Theorem \ref{maintheorem}.

\vskip.5cm

\section*{Appendix A : Proof of Theorem \ref{dvoretzky-sharp} (apr\`es  Schechtman)}

We sketch here a proof of Theorem \ref{dvoretzky-sharp}, essentially following Schechtman \cite{schechtman}. As we already mentioned, a simple use of a $\e$-net argument gives a parasitic factor $\log(1/\e)$. 
This can be improved by a {\itshape chaining} argument, which goes back (at least) to Kolmogorov  --- 
a way to use $\eta$-nets for all values of $\eta$ simultaneously. 

Consider the canonical inclusion $\C^m \subset \C^n$, and let $U \in \mU(n)$ be a random Haar-distributed unitary matrix. Then 
$F:=U(\C^m)$ is distributed according to the Haar measure on the Grassmann manifold of $m$-dimensional subspaces. If $f : S_{\C^n} \to \R$ is a $1$-Lipschitz circled function with mean $\mu$, we need to show that $\osc(f\circ U,S_{\C^m},\mu) \leq \e$ with large probability
provided $m \leq c_0 n \e^2$. We first prove a lemma.

\begin{lem} \label{lemma-subgaussian}
Let $f : S_{\C^n} \to \R$ be a $1$-Lipschitz circled function and $U \in \mU(n)$ be a Haar-distributed
random unitary matrix.
Then for any $x,y \in S_{\C^n}$ with $x \neq y$ and for any $\lambda>0$,
\[ \P( |f(Ux)-f(Uy)| > \lambda) \leq C \exp\left(-cn\frac{\lambda^2}{|x-y|^2}\right) \]
\end{lem}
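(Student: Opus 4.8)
The plan is to reduce the two-point estimate for $f \circ U$ to a single application of L\'evy's lemma on the unitary group, by observing that, for fixed $x, y \in S_{\C^n}$, the map $U \mapsto f(Ux) - f(Uy)$ is a Lipschitz function on $\mU(n)$ whose Lipschitz constant is controlled by $|x-y|$. First I would fix the metric structure: equip $\mU(n)$ with the (normalized) Hilbert--Schmidt metric $\mathrm{dist}(U,U') = \|U - U'\|_{HS}$, for which the concentration inequality on $\mU(n)$ (the unitary-group analogue of L\'evy's lemma, obtainable from the Ricci curvature lower bound of $\mathrm{SU}(n)$, or by transporting L\'evy's lemma from a sphere on which $\mU(n)$ acts) reads: every $1$-Lipschitz $\Psi : \mU(n) \to \R$ satisfies $\P(|\Psi(U) - \mu_\Psi| > t) \leq C\exp(-cnt^2)$ for any central value $\mu_\Psi$.

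Next I would estimate the Lipschitz constant of $\Psi(U) := f(Ux) - f(Uy)$. Since $f$ is $1$-Lipschitz on $S_{\C^n}$, for any $U, U'$ we have $|f(Ux) - f(U'x)| \leq |Ux - U'x| = |(U-U')x| \leq \|U - U'\|_{\iy} \leq \|U - U'\|_{HS}$, and similarly for the $y$-term, so $\Psi$ is $2$-Lipschitz. This crude bound, however, loses the crucial factor $|x-y|$; the point of the lemma is precisely that increments of $f \circ U$ between \emph{nearby} points are small. So instead I would split $\Psi$ as a sum of two pieces, each controlled by $|x-y|$. The cleanest route: write $f(Ux) - f(Uy)$ and bound it directly by $|Ux - Uy| = |U(x-y)| \le |x - y|$ when $f$ is $1$-Lipschitz — this already gives $|\Psi| \leq |x-y|$ \emph{deterministically}, which combined with concentration around a central value would be wasteful. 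The efficient argument rather observes that $\mu_\Psi$ itself is small: by symmetry of the Haar measure (invariance under $U \mapsto U g$ for $g$ a unitary swapping $x$ and $-y$ appropriately, using that $f$ is circled) one shows $\E[f(Ux) - f(Uy)] = 0$, hence $0$ is a central value of $\Psi$, and moreover $\Psi$ is $\sqrt{2}\,|x-y|$-Lipschitz once one symmetrizes; then concentration gives $\P(|\Psi| > \lambda) \leq C\exp(-cn\lambda^2/|x-y|^2)$ directly.

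The step I expect to be the main obstacle is getting the Lipschitz constant of $U \mapsto f(Ux) - f(Uy)$ to genuinely scale like $|x-y|$ rather than like a constant; the naive triangle-inequality bound gives $2$, which is useless. The resolution is to exploit that $U$ and $U'$ differ by a \emph{small} perturbation and that $f(Ux) - f(Uy)$ is a difference, writing
\[
\big(f(Ux) - f(Uy)\big) - \big(f(U'x) - f(U'y)\big) = \big(f(Ux) - f(U'x)\big) - \big(f(Uy) - f(U'y)\big),
\]
and then noting that each of the two bracketed terms is $\leq |(U-U')x|$, resp. $\leq |(U-U')y|$, which is \emph{not} yet proportional to $|x-y|$ — so one must instead interpolate: parametrize the segment from $x$ to $y$ and bound the increment of $U \mapsto f(Ux) - f(Uy)$ by integrating the differential, which brings in a factor $|x-y|$. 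Carrying this out carefully (or, alternatively, invoking the standard fact that for the action of $\mU(n)$ on pairs of points the orbit map has derivative bounded by the geodesic distance between the points) is the one genuinely delicate point; the rest is a direct substitution into the unitary-group concentration inequality. I would also remark that the hypothesis ``$f$ circled'' enters only to make the mean of $f(Ux) - f(Uy)$ vanish (or to identify a convenient central value), and that, as in L\'evy's lemma, replacing the mean by any other central value costs only a change of the absolute constants.
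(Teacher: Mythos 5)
The proposal has a genuine gap at precisely the step you flag as delicate, and the proposed fix (integrating the differential along the segment from $x$ to $y$) cannot close it. The claim that $\Psi : U \mapsto f(Ux) - f(Uy)$ has Lipschitz constant $O(|x-y|)$ on $\mU(n)$ (with respect to, say, the Hilbert--Schmidt metric) is simply false for a general $1$-Lipschitz circled $f$. Take $f(v) = |\langle a, v\rangle|$ for a fixed unit vector $a$; this is circled and $1$-Lipschitz. Writing $z = (x+y)/2$ and $w = (y-x)/2$, one has $\Psi(U) = |\langle a, Uz\rangle + \langle a, Uw\rangle| - |\langle a, Uz\rangle - \langle a, Uw\rangle|$. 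At a $U$ with $\langle a, Uz\rangle = 0$ one gets $\Psi(U) = 0$, while for any $\delta < |w| = |x-y|/2$ one can produce $U'$ with $\|U - U'\|_{HS} \sim \delta$ such that $\langle a, U'z\rangle$ has modulus $\sim \delta$ and is aligned in phase with $\langle a, U'w\rangle$, giving $\Psi(U') \approx 2\delta$. The local Lipschitz ratio of $\Psi$ is therefore bounded below by an absolute constant, independently of how small $|x-y|$ is; the naive triangle-inequality bound $2$ is essentially sharp. The interpolation argument cannot rescue this, because $f$ is only Lipschitz, not $C^{1,1}$: the gradient of $f$ may rotate arbitrarily fast near singularities of its level sets (here, the zero set of $\langle a,\cdot\rangle$), and that is exactly where $\Psi$ has $\Theta(1)$ slope in $U$. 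Applying concentration on $\mU(n)$ directly to $\Psi$ thus yields only the useless bound $C\exp(-cn\lambda^2)$.

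The paper's proof avoids this obstruction by conditioning, which is the missing idea. One first uses the circled hypothesis to replace $y$ by $e^{i\theta}y$ so that $\langle x, y\rangle$ is real nonnegative; this makes $x+y$ and $y-x$ orthogonal and is what ``circled'' is actually for (the fact that $\E\Psi = 0$ already follows from $Ux$ and $Uy$ being equidistributed, without circledness). Setting $z = (x+y)/2$, $\beta = |x-y|/2$, $w' = (y-x)/|y-x|$, one observes that conditionally on $u := U(z)$, the vector $v := U(w')$ is Haar-uniform on the unit sphere of $u^\perp$, and $f(Ux) - f(Uy)$ has the conditional law of $f_u(v) := f(u + \beta v) - f(u - \beta v)$ on the $(2n-3)$-dimensional sphere $S_{u^\perp}$. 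This $f_u$ really is $2\beta$-Lipschitz in $v$ and has mean zero by the $v \mapsto -v$ symmetry, so L\'evy's lemma on $S_{u^\perp}$ gives the subgaussian tail with the correct factor $|x-y|^{-2}$ in the exponent, and the bound survives averaging over $u$. It is the conditioning that converts the deterministic bound $|\Psi|\leq |x-y|$ into a \emph{Lipschitz constant} of order $|x-y|$ on the right probability space; on $\mU(n)$ that factor never enters the concentration machinery.
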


\begin{proof}
Fix $x,y \in S_{\C^n}$. 
Since $f$ is circled (and $U$ is $\C$-linear), we may replace $y$ by $e^{i\theta}y$ and choose $\theta$ so that $\langle x | y \rangle$ is real nonnegative; note that this choice of $\theta$ minimizes $|x-y|$
and assures that $x+y$ and $y-x$ are orthogonal.  (This is the only {\em really} new point needed to acommodate the complex setting.) 
Set  $z=\frac{x+y}{2}$ and  $w=\frac{y-x}{2}$, 
then  $x=z+w$ and $y=z-w$. 
Further, set $\beta = |w|=\frac 12 |x-y|$  
(we may assume that $\beta\neq 0$) and $w'=\beta^{-1}w$. 
Then, 
conditionally on $u=U(z)$,
$U(w')$ is distributed uniformly on the sphere $S_{u^\perp}:=S_{\C^n} \cap u^\perp$.
Since
$U(x) = u+ \beta{U}(w')$ and $U(y) = u - \beta{U}(w')$, it follows that the conditional (on $u=U(z)$) distribution of $f(Ux)-f(Uy)$ is the same as that of $f_u : S_{u^\perp} \to \R$ defined by
$$
f_u(v) = f(u+\beta v) - f(u-\beta v) .
$$
As is readily seen, $f_u$ is $2\beta$-Lipschitz and its mean is $0$.
From L\'evy's lemma, applied to $f_u$ and to the $(2n-3)$-dimensional sphere $S_{u^\perp}$, 
we deduce that, conditionally on $u=U(z)$,
\[ \P( |f(Ux)-f(Uy)| > \lambda) \leq C_1 \exp(-c_1(2n-2)\lambda^2/|x-y|^2),\]
and hence the same inequality holds also without the conditioning.
\end{proof}

The end of the proof (the actual chaining argument) is identical to that in  Schechtman's paper, so --- rather than 
copying it --- we present the general principle on which it is based.   Let $(S,\rho)$ be a compact metric space 
and let $\big(X_s\big)_{s\in S}$ be a family of mean $0$ random variables (a stochastic process indexed by $S$). We say that $\big(X_s\big)$ is {\itshape subgaussian} if there are $A,\alpha >0$ such that, for all $s,t \in S$ with $s \neq t$ and for all $\lambda\geq 0$,
\begin{equation} \label{subgaussian-tails} \P( |X_s-X_t| \geq \lambda) \leq A \exp\left(-\alpha \frac{\lambda^2}{\rho(s,t)^2}\right), \end{equation}

\begin{prop}[Dudley's inequality] \label{dudleyineq}
If $(X_s)_{s \in S}$ satisfies \eqref{subgaussian-tails} and some mild regularity conditions,
\vskip-6mm
\[ \E \sup_{s,t \in S} |X_s-X_t| \leq C'A\alpha^{-1/2} \int_0^\iy \sqrt{\log N(S, \eta)}\, d\eta .\] 
where $N(S,\eta)$ is the 
minimal cardinality of a $\eta$-net of $S$ (in particular the integrand is $0$ if $\eta$ is larger than the radius of $S$). 
\end{prop}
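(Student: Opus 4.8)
The plan is to prove Proposition~\ref{dudleyineq} by the classical \emph{chaining} argument: approximate each $s\in S$ simultaneously by members of $\eta$-nets at a geometric sequence of scales and control the telescoped increments of the process scale by scale. Two reductions come first. The ``mild regularity conditions'' should be read as separability of the process: then $\sup_{s,t\in S}|X_s-X_t|$ is measurable, equals the supremum over a countable subset, and hence is the monotone limit of the suprema over finite subsets, so by monotone convergence it suffices to prove the bound with $S$ replaced by an arbitrary \emph{finite} subset (passing to a subset changes the right-hand side by at most a universal factor, since $N(S',\eta)\leq N(S,\eta/2)$). So I would assume $S$ finite, with diameter $\Delta$. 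Second, I would record the elementary \emph{subgaussian maximal inequality}: if $Y_1,\dots,Y_M$ are real random variables with $\P(|Y_i|\geq\lambda)\leq A\exp(-\alpha\lambda^2/b^2)$ for all $\lambda\geq0$, then
\[ \E\max_{i\leq M}|Y_i| \leq C\,b\,\alpha^{-1/2}\big(\sqrt{\log(M+1)}+\sqrt{\log(A+1)}\big), \]
which follows by writing $\E\max_i|Y_i|=\int_0^\infty\P(\max_i|Y_i|\geq\lambda)\,d\lambda\leq\int_0^\infty\min\!\big(1,MA\exp(-\alpha\lambda^2/b^2)\big)\,d\lambda$ and splitting at the threshold $\lambda_0=b\alpha^{-1/2}\sqrt{\log(1+MA)}$.

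Next I would fix a base point $s_0\in S$, put $\eta_j=\Delta\,2^{-j}$ for $j\geq0$, and for each $j$ choose an $\eta_j$-net $T_j\subseteq S$ with $|T_j|=N(S,\eta_j)$, taking $T_0=\{s_0\}$; let $\pi_j:S\to T_j$ be a nearest-point map, so $\rho(s,\pi_j(s))\leq\eta_j$ and $\pi_j$ is the identity once $\eta_j$ drops below half the least interpoint distance. Since $S$ is finite the telescoping identity $X_s-X_{s_0}=\sum_{j\geq1}\big(X_{\pi_j(s)}-X_{\pi_{j-1}(s)}\big)$ is a finite sum, so
\[ \E\sup_{s}|X_s-X_{s_0}| \leq \sum_{j\geq1}\E\max_{s}\big|X_{\pi_j(s)}-X_{\pi_{j-1}(s)}\big|. \]
For fixed $j$ the expression $X_{\pi_j(s)}-X_{\pi_{j-1}(s)}$ takes at most $|T_j|\,|T_{j-1}|\leq N(S,\eta_j)^2$ distinct values (as $N(S,\cdot)$ is non-increasing), each a difference of the process at two points within distance $\eta_j+\eta_{j-1}=3\eta_j$; so \eqref{subgaussian-tails} and the maximal inequality give
\[ \E\max_{s}\big|X_{\pi_j(s)}-X_{\pi_{j-1}(s)}\big| \leq C\,\eta_j\,\alpha^{-1/2}\big(\sqrt{\log N(S,\eta_j)}+\sqrt{\log(A+1)}\big). \]

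It then remains to sum over $j$ and recognise the entropy integral. Since $\eta\mapsto N(S,\eta)$ is non-increasing, $\int_{\eta_{j+1}}^{\eta_j}\sqrt{\log N(S,\eta)}\,d\eta\geq(\eta_j/2)\sqrt{\log N(S,\eta_j)}$, and summing over the disjoint intervals $[\eta_{j+1},\eta_j]$ yields $\sum_{j\geq1}\eta_j\sqrt{\log N(S,\eta_j)}\leq 2\int_0^\infty\sqrt{\log N(S,\eta)}\,d\eta$ (the integrand vanishing for $\eta\geq\Delta$). For the residual term, $\sum_{j\geq1}\eta_j=\Delta$, and because $N(S,\eta)\geq2$ whenever $\eta<\Delta/2$ one has $\Delta\leq C\int_0^\infty\sqrt{\log N(S,\eta)}\,d\eta$, so the $\sqrt{\log(A+1)}$ contribution is absorbed too (I would not track the $A$-dependence more sharply than the factor $A$ in the statement). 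Combining, $\E\sup_s|X_s-X_{s_0}|\leq C'A\alpha^{-1/2}\int_0^\infty\sqrt{\log N(S,\eta)}\,d\eta$, and $\sup_{s,t}|X_s-X_t|\leq 2\sup_s|X_s-X_{s_0}|$ finishes the argument.

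The only genuinely delicate point — the reason the statement invokes ``mild regularity conditions'' — is the reduction to finite $S$: one needs separability to make $\sup_{s,t}|X_s-X_t|$ measurable and to exhibit it as an increasing limit of suprema over finite sets, and one must confirm that restricting to a finite subset inflates the entropy integral by at most a universal constant. Everything after that is routine: the chaining is a finite algebraic identity and the two displayed estimates are standard, the only care being to keep the additive constants coming from $A$ and from the ``$+1$'' inside the maximal inequality dominated by the integral — which is exactly what $N(S,\eta)\geq2$ for small $\eta$ provides.
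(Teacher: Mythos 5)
Your proof is correct and follows essentially the same route as the paper's: a dyadic chaining decomposition $X_s - X_{s_0} = \sum_j (X_{\pi_j(s)} - X_{\pi_{j-1}(s)})$, a subgaussian maximal inequality applied at each scale to the $N(S,\eta_j)^2$ pairs of chained points, and a comparison of the resulting geometric series with the entropy integral. The only differences are bookkeeping — you index the nets by $\eta_j = \Delta 2^{-j}$ rather than $2^{-k}$ for $k \geq k_0$, you make the measurability/separability reduction to finite $S$ explicit, and you track the residual $\sqrt{\log(A+1)}$ term a bit more carefully — none of which changes the substance.
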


See \cite{dudley} for the original article, \cite{jm} for a generalization to the subgaussian 
case that is relevant here, and \cite{talagrand} for a book exposition; we also sketch a proof further below for the reader's convenience.
 
In our case we choose $S = S_{\C^m} \cup \{0\}$ (with the usual 
Euclidean metric), $X_s = f(Us) - \mu$ if $s\in S_{\C^m}$ and $X_0=0$ ; then
\vskip-4mm
\[ \osc(f \circ U, S_{\C^m}, \mu) = \sup_{x \in S} |X_s| .\]
 The underlying probability space is $\mU(n)$, and the subgaussian property is given by Lemma \ref{lemma-subgaussian} if $s, t \in S_{\C^m}$ and directly by L\'evy's lemma if $s$ or $ t$ equals $0$. Next, the bound 
$N\big(S_{\C^m},\eta\big)  = N\big(S^{2m-1},\eta\big) \leq  (1+2/\eta)^{2m}$ mentioned in the comments following Lemma \ref{levy} leads to 
an estimate $2\sqrt{m}$ for the integral and to the bound 
\vskip-2mm
$$
E:= \E \sup_{s\in S} |X_s| \leq \E \sup_{s, t\in S} |X_s-X_t|  \leq C'C(cn)^{-1/2} \cdot 2\sqrt{m} = C''\sqrt{\frac mn} .
$$
(For readers confused by different quantities appearing on the left side in different forms of Dudley's inequality, 
we point out that the first inequality above uses the fact that one of the variables $X_t$ equals $0$, 
and that we always have $\sup_{s, t} |X_s-X_t|  = \sup_{s}  X_s + \sup_{t} (-X_t)$.) 
The assertion of Theorem \ref{dvoretzky-sharp}  follows now from Markov's inequality if $\e$ 
is sufficiently larger than $E$, which is assured by choosing $c_0$ small enough. 
A slightly more careful argument (such as that given in \cite{schechtman}, or see \cite{talagrand}) 
or an application of  the appropriate concentration inequality (for functions on $\mU(n)$) 
yields a bound of the form  $\exp(-c'\e^2n)$ on the probability of the exceptional set 
$\sup_{s\in S} |X_s|  > C''\sqrt{\frac mn}+\e$ (hence for the exceptional set from Theorem \ref{dvoretzky-sharp}). 

Let us comment here that the value of the constant $c_0$ given by the proof of Theorem 
\ref{dvoretzky-sharp} is probably the single most important obstacle to showing 
Theorem  \ref{maintheorem} for ``reasonable'' values of $k, m$.  An adaptation of the proof 
from \cite{gordon} (which yields good constants) to the complex case 
could be helpful here.

\begin{proof}[Proof of Dudley's inequality]
For every $k \in \Z$, let $\mN_k$ be a $2^{-k}$-net of minimal cardinality for $(S,\rho)$. Let $k_0 \in \Z$ such that the radius of $S$ lies between $2^{-(k_0+1)}$ and $2^{-k_0}$; the net $\mN_{k_0}$ consists of a single element $s_0$. For every $s \in S$ and $k \in \Z$, let    $\pi_k(s)$ be an element
of $\mN_k$ satisfying $\rho(s,\pi_k(s)) \leq 2^{-k}$. The {\itshape chaining equation} reads for every $s \in S$
\begin{equation} \label{chaining1} X_s = X_{s_0} + \sum_{k \geq k_0} X_{\pi_{k+1}(s)} -X_{\pi_k(s)} .\end{equation}
(It is here where some regularity of $(X_s)$ -- path continuity -- is used.) It follows that
\begin{equation} \label{chaining2}
\sup_{s,t \in S} |X_s-X_t| \leq 2 \sum_{k \geq k_0} \sup_{s \in S} | X_{\pi_{k+1}(s)} -X_{\pi_k(s)} | \leq 2 \sum_{k \geq k_0} \sup_{u,u'} |X_u-X_{u'}|, \end{equation}
where the last supremum is taken over couples $(u,u') \in \mN_{k+1} \times \mN_{k}$ satisfying $\rho(u,u') \leq 2^{-k}+2^{-(k+1)}< 2^{-k+1}$.
It remains to bound the expectation of each term in the sum, using the following fact

\begin{fact} \label{max-N-gaussian}
If  $N \geq 2$ and $Y_1,\dots,Y_N$ are nonnegative random variables satisfying the tail estimate $ \P( Y_i \geq t) \leq A \exp (-t^2/2\beta^2)$ for all $t \geq 0$, then
\[ \E \max Y_i \leq C A \beta \sqrt{\log N} .\]
\end{fact}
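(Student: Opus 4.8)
The plan is to control $\E\max_i Y_i$ by the classical layer-cake argument combined with a union bound, splitting the range of integration at the scale where the union bound becomes efficient.

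First I would observe that the hypothesis forces $A \geq 1$: evaluating the tail bound at $t = 0$ gives $1 = \P(Y_i \geq 0) \leq A$. This lets me absorb harmless numerical factors into $A$ at the end, and $N \geq 2$ similarly lets me replace $\log 2$ by $\log N$. Now write $\E \max_i Y_i = \int_0^\iy \P(\max_i Y_i > t)\, dt$ and fix the threshold $t_0 := \beta\sqrt{2\log N}$. On $[0,t_0]$ I use the trivial bound $\P(\max_i Y_i > t) \leq 1$, which contributes at most $t_0 \leq A\beta\sqrt{2\log N}$. On $[t_0,\iy)$ I use the union bound $\P(\max_i Y_i > t) \leq \sum_i \P(Y_i > t) \leq NA\exp(-t^2/2\beta^2)$ together with the elementary inequality $\exp(-t^2/2\beta^2) \leq \exp(-t t_0/2\beta^2)$, valid for $t \geq t_0$; the resulting exponential integral is explicit and equals $NA\cdot\frac{2\beta^2}{t_0}\exp(-t_0^2/2\beta^2) = \frac{2A\beta^2}{t_0}$, since $\exp(-t_0^2/2\beta^2) = 1/N$ by the choice of $t_0$. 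Substituting $t_0$ back in, this second contribution is $\sqrt2\,A\beta/\sqrt{\log N}$, which is at most a constant multiple of $A\beta\sqrt{\log N}$ because $N \geq 2$. Adding the two pieces yields $\E\max_i Y_i \leq CA\beta\sqrt{\log N}$ for a suitable absolute constant $C$.

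I expect no genuine obstacle here: the only points requiring a little care are the tuning of the splitting point $t_0$ (it must be chosen so that $N\exp(-t_0^2/2\beta^2)$ is of order $1$) and the bookkeeping with $A\geq1$, $N\geq2$ to collapse the final constant. If one prefers to avoid the explicit tail integral, an equally short alternative is the exponential-moment method: a layer-cake computation gives $\E\exp(Y_i^2/4\beta^2)\leq 1+A$, whence by Jensen's inequality applied to the convex increasing function $x\mapsto\exp(x^2/4\beta^2)$ on $[0,\iy)$ one obtains $\exp\big((\E\max_i Y_i)^2/4\beta^2\big)\leq\sum_i\E\exp(Y_i^2/4\beta^2)\leq 2NA$, and taking logarithms gives $\E\max_i Y_i\leq 2\beta\sqrt{\log(2NA)}$, which is again $\leq CA\beta\sqrt{\log N}$ since $A\geq1$ and $N\geq2$.
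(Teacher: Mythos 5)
Your main argument is correct and is essentially the paper's proof: integrate the tail, split at $t_0=\beta\sqrt{2\log N}$, use the trivial bound $\P(\max_i Y_i>t)\le 1$ below $t_0$ and the union bound above $t_0$, and control the residual Gaussian tail by an elementary comparison (you replace $\exp(-t^2/2\beta^2)$ by $\exp(-t t_0/2\beta^2)$, while the paper inserts the factor $t\ge 1$; both yield $\approx A\beta/\sqrt{\log N}$). The exponential-moment/Jensen alternative you sketch at the end is also sound, but it is a side remark and not the route the paper takes.
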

To bound $\E \sup | X_u-X_{u'}|$, we apply the above fact with $\beta = 2^{-k+1} \alpha^{-1/2}$ and $N= \card(\mN_k) \cdot \card(\mN_{k+1}) \leq N(S,2^{-(k+1)})^2$. This gives
\[ \E \sup_{s,t \in S} |X_s-X_t| \leq C'A\alpha^{-1/2} \sum_{k \geq k_0} 2^{-k} \sqrt{\log N(S,2^{-(k+1)})} \]
The result now follows by relating the last series to the integral in Proposition \ref{dudleyineq} (a version of the integral test from calculus).
\end{proof}

\begin{proof}[Proof of Fact \ref{max-N-gaussian}]
We may assume $\beta = 1$ by working with $Y_i/\beta$. Then simply write
\[
\E \max Y_i = \int_0^\infty \P (\max Y_i \geq t) dt
\leq \sqrt{2\log N} + AN \int_{\sqrt{2\log N}}^\infty \exp(-t^2/2) dt
\leq \sqrt{2\log N} + A
.\]
The last inequality follows from $\displaystyle \int_{\sqrt{2\log N}}^\infty \exp(-t^2/2) dt \leq \int_{\sqrt{2\log N}}^\infty t \exp(-t^2/2) dt = 1/N$. Note that the hypotheses force $A\geq 1$.
\end{proof}

\section*{Appendix B  : Proof of lemma \ref{concentration-spectrum}}

The lemma will follow if we show that with large probability, 
\[ \| \Delta \|_{\iy} \leq \frac{C}{\sqrt{kd}}, \]
where $\Delta=MM^\dagger-\Id/k \in \M_k$ and $\| \cdot\|_{\iy} $ is the operator (or spectral) norm. Let $\mN$ be a $\frac{1}{4}$-net of $S_{\C^k}$ with cardinality bounded by $(C_0)^k$. One checks that if $x \in S_{\C^k}$ and $\bar{x} \in \mN$ satisfy $|x-\bar{x}|\leq 1/4$, then
\[ \left| \bra{x} \Delta \ket{x} \right| \leq \left| \bra{\bar x} \Delta \ket{\bar x} \right| + \left| \bra{x-\bar x} \Delta \ket{\bar x} \right| + \left| \bra{x} \Delta \ket{x-\bar x} \right| \leq \left| \bra{\bar x} \Delta \ket{\bar x} \right| + 2 \cdot
\frac{1}{4} \|\Delta\|_{\iy},\]
so that taking supremum over $x \in S_{\C^k}$, we get
\[ \| \Delta \|_{\iy} \leq 2 \sup_{\bar x \in \mN} \left| \bra{\bar x} \Delta \ket{ \bar x} \right| .\]
An application of the union bound gives
\begin{eqnarray*} \P \left( \| \Delta \|_{\iy} \geq \frac{C}{\sqrt{kd}} \right) &\leq& (C_0)^k \cdot \P \left( \bra{x_0} \Delta \ket{x_0} 
\geq \frac{C}{2\sqrt{kd}}\right) \\
& =&  (C_0)^k \cdot \P \left( |M^\dagger x_0|^2 \geq \frac{1}{k} + \frac{C}{2\sqrt{kd}} \right) \\
& \leq & (C_0)^k \cdot \P \left( |M^\dagger x_0| \geq \frac{1}{\sqrt{k}} + \frac{C}{5\sqrt{d}} \right)
\end{eqnarray*}
where $x_0 \in \C^k$ is any fixed unit vector (remember that $d\geq C^2k$). The probabilities above can be expressed in terms of Beta-type integrals, but it's easier to estimate them using L\'evy's lemma. The function $M \mapsto |M^\dagger x_0|$ is $1$-Lipschitz on the Hilbert--Schmidt sphere (if $x_0$ is the first vector of the canonical basis, then $ M^\dagger x_0$ is essentially the first row of $M$) and
\[ \E |M^\dagger x_0| \leq \left(\E |M^\dagger x_0|^2\right)^{1/2} = \sqrt{1/k} .\]
Hence, by L\'evy's lemma (with $n=2kd$ and $\e = \frac{C}{5\sqrt{d}}$), we get
\[ \P \left( \| \Delta \|_{\iy} \geq \frac{C}{\sqrt{kd}} \right) \leq \exp(-c k)\]
for some choice of the constants $C,c>0$, as required. 
\vskip1cm


\begin{thebibliography}{10}

\nocite{}

\bibitem{nc}
Nielsen, M.~A.,  Chuang, I.~L.: {\itshape Quantum computation and quantum information}.
Cambridge University Press, Cambridge (2000)

\bibitem{holevo-icm} 
Holevo, A.~S.: {\itshape The additivity problem in quantum information theory}.  
In ``Proceedings of the International Congress of Mathematicians (Madrid, 2006),'' 
Vol. III,  999--1018, Eur. Math. Soc., Z\"urich (2006)

\bibitem{shor} Shor, P. W.:
{\em Equivalence of additivity questions in quantum information theory}.  
Comm. Math. Phys.  {\bfseries 246},  no. 3, 453--472   (2004)

\bibitem{hastings}
Hastings, M. B.:
{\em Superadditivity of communication capacity using entangled inputs.} 
Nature Physics {\bf 5}, 255 (2009)

\bibitem{dvoretzky} 
Dvoretzky,  A.:
{\em Some Results on Convex Bodies and Banach Spaces.} 
In: {\em Proc. Internat. Sympos. Linear Spaces (Jerusalem, 1960)}, pp. 123--160. 
Jerusalem Academic Press, Jerusalem; Pergamon, Oxford (1961)

\bibitem{bh}
Brandao, F.,  Horodecki, M.: 
{\itshape On Hastings' counterexamples to the minimum output entropy additivity conjecture}. 
Open Syst. Inf. Dyn. {\bfseries 17}, 31 (2010), e-print arxiv:0907.3210v1 [quant-ph]

\bibitem{FKM}   Fukuda, M., King, C., Moser, D.: 
  {\em  Comments on Hastings' Additivity Counterexamples.} 
Commun. Math. Phys. 296, 111 (2010); e-print arxiv:0905.3697 [quant-ph].

\bibitem{hw} Hayden, P.,  Winter, A.:                                                                               
{\em Counterexamples to the maximal $p$-norm
multiplicativity conjecture for all $p > 1$.}
Comm. Math. Phys. {\bfseries 284}, 263--280, (2008);  
e-print arxiv:0807.4753v1 [quant-ph]

\bibitem{asw}
Aubrun, G., Szarek, S.,  Werner, E.: 
{\itshape Non-additivity of R\'enyi entropy and Dvoretzky's theorem.} 
J. Math. Phys. {\bfseries 51}, 022102 (2010)

\bibitem{milman} 
Milman, V.:
{\em A new proof of the theorem
of A. Dvoretzky on sections of convex bodies.} 
Funct. Anal. Appl.  {\bfseries 5} (1971), 28--37 (English translation)

\bibitem{FLM}   
Figiel, T., Lindenstrauss, J.,  Milman, V. D.: 
 {\em  The dimension of almost spherical sections of convex bodies.}
Acta Math.  {\bfseries 139}, no. 1-2, 53--94  (1977)

\bibitem{cn}
Collins, B., Nechita,   I.: 
{\em Gaussianization and eigenvalue statistics for random quantum channels (III)}, 
Ann. Appl. Probab., to appear; e-[print arxiv:0910.1768v2 [quant-ph]

\bibitem{levy} L\'evy, P.:
{\itshape Probl\'emes concrets d'analyse fonctionnelle},  2nd ed. 
Gauthier-Villars, Paris (1951)

\bibitem{pisier} Pisier, G.:
{\em The volume of convex bodies and Banach space geometry.}
Cambridge Tracts in Mathematics,  94.
Cambridge University Press, Cambridge (1989) 

\bibitem{gordon}
Gordon, Y.: 
{\itshape On Milman's inequality and random subspaces which escape through a mesh in $\R^n$}.  
In:  ``Geometric aspects of functional analysis (1986/87),''  Lecture Notes in Math., 1317, pp. 84--106. 
Springer, Berlin (1988)

\bibitem{schechtman}
Schechtman, G.: 
{\itshape A remark concerning the dependence on $\e$ in Dvoretzky's theorem}.  
In: ``Geometric aspects of functional analysis (1987--88),''  
Lecture Notes in Math., 1376,  pp. 274--277, Springer, Berlin (1989)

\bibitem{MP}  Marchenko, V. A.,  Pastur, L. A.:
{\em The distribution of eigenvalues in certain sets of random matrices.} 
Mat. Sb.  {\bfseries 72}, 507-536 (1967)

\bibitem{hlw} Hayden, P., Leung, D.,  Winter, A.: 
{\em Aspects of generic entanglement.} Comm. Math. Phys. {\bfseries 265}, 95--117 (2006)
 
\bibitem{HT}
Haagerup,  U.,  Thorbj\o rnsen, S.: 
 {\em Random matrices with complex Gaussian entries.}
  Expositiones Math.  {\bfseries 21}, 293--337  (2003) 

\bibitem{geman} Geman, S.:
{\it A limit theorem for the norm of random matrices}. 
Ann.\ Probab.\  {\bfseries 8}, 252--261  (1980)

\bibitem{silver} Silverstein, J. W.: 
{\it The smallest eigenvalue of a large-dimensional Wishart matrix},
Ann.\ Probab.\ {\bfseries  13}, 1364--1368,  (1985)

\bibitem{dudley}  Dudley, R. M.: 
{\em The sizes of compact subsets of Hilbert space and continuity of Gaussian processes.} 
 J. Funct. Anal. {\bfseries 1}, 290--330,  (1967)

\bibitem{jm} 
Jain, N. C.,  Marcus, M. B.:
{\em Continuity of subgaussian processes. } 
In: ``Probability on Banach Spaces,''  Advances in Probability, Vol. 4, 81--196, Dekker, New York 1978.

\bibitem{talagrand} Talagrand, M.: 
 {\itshape The generic chaining. Upper and Lower bounds of Stochastic Processes.} 
 Springer 2005.
 


\end{thebibliography}
\end{document}